\documentclass[11pt,a4paper]{article}
\usepackage[utf8]{inputenc}
\usepackage{geometry}
\usepackage[english]{babel}
\usepackage[dvipsnames]{xcolor}
\usepackage{amssymb}
\usepackage{amsmath}
\usepackage{amsfonts}
\usepackage{algorithm}
\usepackage{algorithmicx}
\usepackage{enumitem}
\usepackage{algpseudocode}
\usepackage{float}
\usepackage{graphics}
\usepackage{graphicx}
\usepackage{amsmath}

\newcommand{\probl}[3]{
  \begin{flushleft}
    \fbox{
      \begin{minipage}{\textwidth}
        \noindent {\sc #1}\\
        {\bf Input:} #2\\
        {\bf Output:} #3
      \end{minipage}}
    \medskip
  \end{flushleft}
}

\newcommand{\exemple}{\newline \indent {\textit{Exemple: }}}

\date{}

\usepackage{pdfpages}

\usepackage{latexsym} %Des symboles de maths en +
\usepackage{theorem} %pour changer les styles dans les theoremes.

\usepackage{thm-restate} %Réénoncer le thm exactement comme précédemment

\usepackage{hyperref}
\usepackage[capitalize]{cleveref}

\usepackage{tikz}

\usetikzlibrary{positioning} 
\usetikzlibrary{shapes}
\tikzstyle{Template_A}=[draw,circle,thick, minimum size = 0.5cm, color=black,inner sep=0pt]
\tikzstyle{Template_B}=[color=black,draw]
\tikzstyle{vertex}=[draw,circle,fill=black, color=black,inner sep=1.5pt]
\tikzstyle{vertex2}=[draw,circle,fill,inner sep=2pt]
\tikzstyle{vertex3}=[draw,circle,fill,inner sep=0.5pt]
\tikzstyle{S&L}=[draw,circle,minimum height=0.5cm]
\tikzstyle{labeled}=[draw,circle,inner sep=0.35pt]
\tikzstyle{boite}=[draw,rectangle,rounded corners=3pt]

\newtheorem{DE}{Definition}[section]
\newtheorem{definition}[DE]{Definition}
\newtheorem{theorem}[DE]{Theorem}

\newtheorem{lemma}[DE]{Lemma}

\newcounter{claim}
\counterwithin{claim}{theorem}
\newenvironment{proof}[1][]%
{\noindent{\it Proof. }{#1}{}}{\qed\vspace{2ex}}
\newenvironment{claim}[1][]%
{\refstepcounter{claim}\vspace{1ex} {(\it\arabic{section}.\arabic{theorem}.\arabic{claim}) {#1}{}}\it}{\vspace{2ex}}
\newenvironment{proofclaim}[1][]%
{\noindent {}{#1}}{This proves~(\arabic{section}.\arabic{theorem}.\arabic{claim}).\vspace{2ex}}

\newcommand {\sm} {\setminus}

\newcommand{\qed}{\relax\ifmmode\hskip2em\Box\else\unskip\nobreak\hfill$\Box$\fi}

\newcommand {\kmax} {k_{m}}
\newcommand {\ksum} {k_{s}}
\newcommand {\elementset} {\mathcal{E}}
\newcommand {\solutionset} {\mathcal{U}}
\newcommand {\inputset} {\mathcal{I}}

\newcommand {\rtprt}[2] {\gamma(#1,#2)}
\newcommand {\rtdist}[2] {\delta(#1,#2)}

\author{Julien Baste\thanks{Univ. Lille, CNRS, Centrale Lille, UMR 9189 CRIStAL, F-59000 Lille, France}, Cléophée Robin\thanks{Université Paris Cité, CNRS, IRIF, F-75013, Paris, France} and Marie-Emilie Voge\footnotemark[1] }
\title{On the diversity problems parameterized by the size of the solutions}

\begin{document}

\maketitle

\begin{abstract}
  Diversity optimization seeks multiple high-quality solutions that are sufficiently different from one another, providing a richer representation of the solution space than a single optimum while avoiding the prohibitive cost of complete enumeration.
  In this work, we introduce the notion of natural diversity, a general condition that connects a combinatorial problem $\Pi$ and a diversity measure $\texttt{dist}$.
  We show that if a pair $(\Pi,\texttt{dist})$ is naturally diverse and we have in hand a completion algorithm that, from a partial solution of $\Pi$, can complete it into a solution of $\Pi$, then we can produce a parameterized algorithm solving the diversity variant of $\Pi$ with regard to the distance $\texttt{dist}$ when parameterized by the size of the solutions and the number of expected solutions.
  Furthermore, we show that several widely used diversity measures, including pairwise disjointness, Hamming distance, Jaccard distance, and the Otsuka–Ochiai coefficient (under both minimum and sum aggregation), satisfy the natural diversity property for problems in which all the solutions have the same size.
  Finally, we demonstrate the applicability of our framework by deriving fixed-parameter algorithms for diverse variants of Minimum Vertex Cover and Minimum Steiner Tree. Our results broaden the scope of parameterized diversity algorithms by accommodating natural solution-size parameterizations and a wider class of diversity measures.
\end{abstract}

\section{Introduction}

The \textit{diversity problems} is a set of problems where the goal is to find a fixed number of solutions guaranteed to be distant with regard to a given distance.
These problems are a compromise between searching for a unique solution that is faster but  provides really little information on the solution space, and enumerating all solutions that provide all the information on the solution space at the cost of the running time.

Solving diversity problems has become a very active topic in the last years, in particular in artificial intelligence~\cite{InGaStTa2020,HaKoKuLeOt2022,HaKoKuOt2021,IcIw2024} and in algorithmic~\cite{FoGoJaPhSa2020,FoGoPaPhSa2021}. It is also of high interest in multiple fields like
Evolutionary Computation~\cite{WiOp2003,GaBePhSc2018},
Constraint Programming~\cite{HeHnSuWa2005,PeTr2015},
Mixed Integer Programming~\cite{DaWo2009,AhMeTr2024},
SAT~\cite{NaSaSi2011},
Automata~\cite{ArFeOlWo2023},
Submodular function optimization~\cite{DoGuNeNe2023,NeBoNe2021},
Social choice theory~\cite{ArFeLoOlWo2021},
and
Answer set programming~\cite{EiErErFi2013}.

In a recent work, \cite{AuBeGoLiSr2025} provides randomized algorithms running in time $O(r^3 \cdot 2^{c\cdot n})$, where $0 < c < 1$ is a constant depending on the base problem, $r$ is the number of searched solutions and $n$ the size of the input, to solve the diversity variant of $24$ problems including \textsc{Vertex Cover}, \textsc{Maximum Independent Set}, and \textsc{Feedback Vertex Set}, where the distance is either the minimum or the sum of the Hamming distance between each distinct pair of solutions.

The\textit{ diverse }variant, as defined before, of a problem can be seen as a generalization of the original problem.
In this way the diverse variant is at least as hard as the original version.
Intuitively, one has to set the number of searched solutions to one.
In~\cite{CrRo2002}, the authors show that if finding one solution for the \textsc{$2$-SAT} problem is well known to be polynomial, finding two solutions that maximize the Hamming distance between them is \texttt{PolyApx}-complete.
This shows that a complexity gap can be expected between the problem and its diversity variant.
A natural question is then to determine in which context the complexity of the diversity variant can be connected to the complexity of the original problem.

A first answer to this problem has been provided in~\cite{fvBaFeJaMaOlPhRo2022} in the context of graph problems and the parameterized complexity theory.
In this paper, the authors construct a meta-algorithm that takes as input a specific algorithm for the initial problem and outputs an algorithm that solves the diversity variant.
In the case of~\cite{fvBaFeJaMaOlPhRo2022}, the specified algorithm should be an algorithm parameterized by treewidth, the latter being a structural graph parameter that can be seen as a topological measure between the input graph and a tree.
The outputted algorithm is then an algorithm parameterized by the treewidth of the input graph and the number of searched solutions solving the diversity version when the used distance is the sum of the hamming distances between solutions.
In~\cite{DrMa2024}, the authors provide the same kind of results with clique-width as a parameter instead of treewidth.

\paragraph{Our results.}

While \cite{fvBaFeJaMaOlPhRo2022} and \cite{DrMa2024} provide very powerful parameterized meta-algorithms, these meta-algorithms are restricted, by design, to structural graph parameters.
In this paper, we show that the same kind of meta-algorithm can be constructed when the parameter is the size of the solution.
We want to mention that this later parameter is often considered more natural and moreover is not restricted to graph algorithms.

Moreover, in \cite{fvBaFeJaMaOlPhRo2022} the results are strongly based on the fact that the diversity distances considered can be incrementally computed. In this paper, we are freeing ourselves from this constraint by providing a global requirement on the pair problem/distance.
This global requirement, that we named natural diversity, and formally define in Section~\ref{sec:prelim}, states that completing a partial solution with elements that are not used by any other solutions provides a result that is at least as diverse as any other completion.

In Section~\ref{sec:prelim} we introduce the notation and definition used in the paper, in Section~\ref{sec:algo} we present our main result together with its proof, in Section~\ref{sec:dist} we show some distances compatible with our meta-algorithm, in Section~\ref{sec:ex} we provide some application examples and finish in Section~\ref{sec:conclusion} by a conclusion.

\section{Preliminaries}
\label{sec:prelim}

\paragraph{Sets.}
We denote by $\mathbb{N}$ the set of non-negative integers and by $\mathbb{R}$ the set of all real numbers.
Given two integers $a$ and $b$, we denote by  $[a,b]$ the set of every integer $k$ such that $a \leq k \leq b$.

\paragraph{Graphs.}
We denote by $\mathcal{G}$ the set of all graphs.
Given a graph $G$, we denote by $V(G)$ its set of vertices and $E(G)$ its set of edges.
All the graphs we consider in this paper are simple, non-oriented, and without loops, thus we safely consider each edge as a set of two distinct vertices.
Given a graph $G$ and a set $S \subseteq E(G)$, we denote by $V(S)$ the set $\bigcup_{e\in S}e$ and by $G[S]$ the graph $(V(S),S)$
Given a graph $G$ and a set $X \subseteq V(G)$, we denote by $G[X]$ the subgraph of $G$ induced by $X$.

Given a graph $G=(V,E)$ and two vertices $x$ and $y$ a \textit{$\{x,y\}$-path} $p$ of $G$ is a subset of $E$, such that the graph $G[p]$ is connected  with all vertices of having degree 2 except $x$ and $y$ that have degree 1.

Through this section, in order to illustrate the concepts that we will introduce in the paper, we use a classical parameterized problem, namely \textsc{Vertex Cover}.

\probl{Vertex Cover}
{A graph $G$ and an integer $k$.}
{A set $S \subseteq V(G)$ of size at most $k$ such that for each $e \in E(G)$, $S \cap e \not = \emptyset$ or the correct statement that such a set does not exist.}

Given a graph $G$, a set $S \subseteq V(G)$ is a \emph{vertex cover} of $G$ if for each $e \in E(G)$, $S \cap e \not = \emptyset$.

\paragraph{Properties.}
Let $\inputset$ be a set of inputs, $\elementset$ be a set and $\solutionset = 2^{\elementset}$.
We focus on properties $\Pi$ over the set $\inputset \times \solutionset$.
If $(I,S) \in \inputset \times \solutionset$ satisfy $\Pi$, we write $(I,S) \vdash \Pi$.
In this case, we call $\inputset$ the \emph{input space} of $\Pi$ and $\mathcal{U}$ the \emph{solution space} of $\Pi$.
We always assume that a property $\Pi$ arrives with its input space $\inputset_{\Pi}$ and its solution space $\mathcal{U}_{\Pi}$.
In this paper, we will consider only properties $\Pi$ such that, for each $(I,S) \in \inputset_{\Pi}\times \mathcal{U}_{\Pi}$, we can verify   in polynomial time that $(I,S) \vdash \Pi$.
We say that $\Pi$ is a \emph{fixed size property} if for each $(I,S_1,S_2) \in \inputset_{\pi}\times \mathcal{U}_{\pi}\times \mathcal{U}_{\pi}$, such that $(I,S_1) \vdash \Pi$ and $(I,S_2) \vdash \Pi$, we have $|S_1| = |S_2|$.
\exemple Let $\mathcal{I} = \mathcal{G} \times \mathbb{N}$, let $\mathcal{E}$ be the set of all vertices of graphs in $\mathcal{G}$ and $\mathcal{U}$ the set of all subsets of $\mathcal{E}$. We can consider the property \emph{being a vertex cover of correct size} $\Pi$ such that $(I = (G,k),S) \in \inputset \times \solutionset$ satisfy $\Pi$ if and only if $S$ is a vertex cover of $G$ of size at most $k$.
Moreover, for a property $\Pi'$ such that $(G,S) \in \mathcal{G} \times \solutionset$ satisfies $\Pi'$ if and only if  $S$ is a vertex cover of $G$ of minimum size, we can say that $\Pi'$ is a fixed size property.

\paragraph{Input and output} Given an algorithm, an input (resp. output) is \textit{valid} if it respects the specification of the input (resp. output) in the definition of the algorithm.

\paragraph{Distance functions.}
A \emph{$r$-distance function} over a set $\solutionset$, $\texttt{dist}$, is a function computable in polynomial time from $\mathcal{U}^r \to \mathbb{N}$.
Intuitively, these functions will evaluate how diverse a set of $r$ solutions is.
When $r$ is clear from the context, we also name it \emph{distance function}.
\exemple We consider the $r$-distance function $\texttt{r-disjoint}: \solutionset^r \to \{0,1\}$, formally defined in Section~\ref{sec:dist}, that returns $1$ if all the input sets are pairwise disjoint and $0$ if at least two sets overlap.

\paragraph{Natural diversity}
The results proved throughout this paper are about pairs  $(\Pi,\texttt{dist})$ with $\Pi$ a property and $\texttt{dist}$ a distance function over  $\mathcal{U}_{\Pi}$.
The results will require that this pair is naturally diverse as defined in Definition~\ref{def:natdiv}.
We discuss in Section~\ref{sec:dist} some conditions to be respected such that the pair $(\Pi,\texttt{dist})$ is naturally diverse.

\begin{definition}[naturally diverse]
  \label{def:natdiv}
  Given a property $\Pi$ and a $r$-distance function $\texttt{dist}$ over $\mathcal{U}_{\Pi}$, we say that $(\Pi,\texttt{dist})$ is \emph{naturally diverse}
  if
  
  \begin{itemize}
  \item for each $I \subseteq \inputset_{\Pi}$,
  \item for each $(S_1,\ldots, S_r, X, Y) \in {\mathcal{U}_{\Pi}}^{r+2}$
    \begin{itemize}
    \item such that $S_i \cap Y = \emptyset$ for each $i \in [1,r]$,
    \end{itemize}
  \item for each $i \in [1,r]$
    \begin{itemize}
    \item such that $X \subseteq S_i$, $(I,S_i) \vdash \Pi$, and $(I,(S_i \setminus X) \cup Y) \vdash \Pi$,
    \end{itemize}
  \end{itemize}
  we have that:
  $$\texttt{dist}(S_1, \ldots, S_{i-1}, S_i, S_{i+1}, \ldots S_r) \leq \texttt{dist}(S_1, \ldots, S_{i-1}, (S_i \setminus X) \cup Y, S_{i+1}, \ldots S_r).$$
\end{definition}

Roughly speaking, this definition transcripts the idea that when one is looking for sets that are diverse, completing a partial solution (here $S_i \setminus X)$ with elements that are not used in any other solutions (here $Y$) will produce a solution that is at least as diverse as completing it with any other elements (here $X$).

\begin{lemma}
  \label{rk:natdiv}
  Let $\Pi_1, \ldots, \Pi_r$ be $r$ properties and $\texttt{dist}$ be a $r$-distance function  over $\mathcal{U}_{\Pi_1} = \ldots = \mathcal{U}_{\Pi_r}$ such that for each $i \in [1,r]$, we have that $(\Pi,\texttt{dist})$ is \emph{naturally diverse}.
  Let $(I_1, \ldots, I_r) \in \inputset_{\Pi_1} \times \ldots \times \inputset_{\Pi_r}$, 
  $(S_1, \ldots, S_r) \in \solutionset_{\Pi_1} \times \ldots \times \solutionset_{\Pi_r}$, and
  $(S'_1, \ldots, S'_r) \in \solutionset_{\Pi_1} \times \ldots \times \solutionset_{\Pi_r}$ such that
  for each $i \in [1,r]$, we have $S_i \subseteq S'_i$, $(I_i,S_i) \vdash \Pi_i$ and $(I_i,S'_i) \vdash \Pi_i$.
  
  In this context we have that 
  $\texttt{dist}(S'_1, \ldots, S'_r) \leq \texttt{dist}(S_1, \ldots, S_r).$
\end{lemma}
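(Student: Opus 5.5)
The plan is to go from $(S_1,\ldots,S_r)$ to $(S'_1,\ldots,S'_r)$ one coordinate at a time, and to apply Definition~\ref{def:natdiv} to each step in the direction that gives the stated inequality. Note that the hypothesis "$(\Pi,\texttt{dist})$ is naturally diverse" should be read as "$(\Pi_i,\texttt{dist})$ is naturally diverse" for each $i$. I use the hybrid tuples
$$T^{(j)} = (S_1,\ldots,S_j,S'_{j+1},\ldots,S'_r), \qquad j\in[0,r],$$
so that $T^{(0)} = (S'_1,\ldots,S'_r)$ and $T^{(r)} = (S_1,\ldots,S_r)$. It then suffices to show $\texttt{dist}(T^{(j-1)}) \leq \texttt{dist}(T^{(j)})$ for each $j\in[1,r]$ and chain the inequalities.

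For a fixed $j$, I apply Definition~\ref{def:natdiv} to the property $\Pi_j$, the input $I_j$ and index $i=j$, with the following choices. The $r$-tuple is $T^{(j-1)}$, whose $j$-th entry is $S'_j$. I set $X = S'_j \setminus S_j$ and $Y=\emptyset$.

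The hypotheses of the definition are then checked directly:
\begin{itemize}
\item $Y=\emptyset$ is disjoint from every entry of the tuple.
\item $X \subseteq S'_j$ holds by construction.
\item $(I_j,S'_j)\vdash\Pi_j$ holds by assumption.
\item $(S'_j\setminus X)\cup Y = S_j$, because $S_j\subseteq S'_j$, and $(I_j,S_j)\vdash \Pi_j$ holds by assumption.
\end{itemize}
The conclusion of the definition is $\texttt{dist}(T^{(j-1)}) \leq \texttt{dist}(T^{(j)})$, since replacing the $j$-th entry $S'_j$ by $S_j$ turns $T^{(j-1)}$ into $T^{(j)}$. Chaining over $j=1,\ldots,r$ gives $\texttt{dist}(S'_1,\ldots,S'_r)\le \texttt{dist}(S_1,\ldots,S_r)$.

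The main obstacle is bookkeeping rather than mathematics. The definition only constrains the $i$-th entry to satisfy $\Pi$, while the other entries of the tuple are arbitrary elements of $\mathcal{U}$. This is why a mixed tuple containing solutions of different properties $\Pi_k$ is allowed, and why natural diversity of each $(\Pi_j,\texttt{dist})$ separately is enough. The one technical point is that the choice $Y=\emptyset$ must be a legal element of $\mathcal{U}_{\Pi_j}=2^{\elementset}$, which it is.
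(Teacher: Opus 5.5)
Your proof is correct and follows the paper's argument. The paper makes the same choice $X_i = S'_i \setminus S_i$, $Y_i = \emptyset$ and simply appeals to Definition~\ref{def:natdiv}. It leaves implicit the coordinate-by-coordinate chaining through hybrid tuples that you spell out; that chaining is needed, because the definition replaces only one entry at a time.
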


\begin{proof}
  Let, for each $i \in [1,r]$, $X_i = S'_i \setminus S_i$ and $Y_i = \emptyset$.
  We have that, for each $i \in [1,r]$, $S_i = (S'_i \setminus X_i) \cup Y_i$ and for each $j \in [1,r]$, $S'_j \cap Y_i = \emptyset$. Combining this with Definition~\ref{def:natdiv} provides the remark.
\end{proof}

\paragraph{Diversity problems.}
Let $r$ be an integer,  $\inputset_1, \ldots, \inputset_r$ and $\elementset$ be $r+1$ sets, $\solutionset = 2^{\elementset}$,  \texttt{dist} be a $r$-distance functions over $\solutionset$, and
for each $i \in [1,r]$, $\Pi_i$ be a property over the input space $\inputset_i$ and the solution space $\solutionset$.
The main problem addressed in this paper is \textsc{$\texttt{dist}$-Diverse-($\Pi_1$,$\ldots$, $\Pi_r$)} defined as follows:

\probl{$\texttt{dist}$-Diverse-($\Pi_1$,$\ldots$, $\Pi_r$)}
{$r$ inputs $(I_1, \ldots, I_r) \in \inputset_1 \times \ldots \times \inputset_r$ and an integer $d$.}
{$r$ sets $S_1, \ldots, S_r \in \mathcal{U}$ such that for each $i \in [1,r]$, $(I_i,S_i) \vdash \Pi_i$ and $\texttt{dist}(S_1, \ldots, S_r) \geq d$, or the correct statement that such a set does not exist.}

In order to solve the previous problem, we always assume that for each property $\Pi$ involved in the \textsc{$\texttt{dist}$-Diverse-($\Pi_1$,$\ldots$, $\Pi_r$)} problem, there exists an algorithm solving the $\Pi$-completion problem, defined as follows.

\probl{$\Pi$-completion}
{An input $I \in \inputset_{\Pi}$ and two sets $A$ and $B$ in $\mathcal{U}_{\Pi}$ such that $A \subseteq B$.}
{A set $S \in \mathcal{U}_{\Pi}\setminus B$ such that $(I,A \cup S) \vdash \Pi$  or a correct statement that such a set does not exist.}

Intuitively, the \textsc{$\Pi$-completion} problem can be seen as follows: Assuming that we already have processed the vertices contained in the set $B$ and decided that from them, $A$ is part of the solution and the remaining is not, can we complete $A$ with elements not in $B$  to a solution of $\Pi$.

\section{General algorithm}
\label{sec:algo}

With all the definitions stated in the previous section, we are now ready to state the main theorem of the paper.

\begin{restatable}{theorem}{main}
  \label{th:main}
  Let $r$ be an integer,  $\inputset_1, \ldots, \inputset_r$ and $\elementset$ be $r+1$ sets, $\solutionset = 2^{\elementset}$,  \texttt{dist} be a $r$-distance function over $\solutionset$, and
  for each $i \in [1,r]$, $\Pi_i$ be a property over the input space $\inputset_i$ and the solution space $\solutionset$
  such that:
  \begin{itemize}
  \item  there exists an algorithm $\mathcal{A}_i$ that solves \textsc{$\Pi_i$-completion} in time $f_i(|I_i|,|A|,|B|)$ on input $(I_i,A,B)$ and 
  \item $(\Pi_i,\texttt{dist})$ is naturally diverse.
  \end{itemize}
  Then, there exists an algorithm that solves, for every input $(I_1, \ldots, I_r,d) \in \inputset_1 \times \ldots \times \inputset_r \times \mathbb{N}$,  \textsc{\texttt{dist}-Diverse-($\Pi_1$,$\ldots$, $\Pi_r$)} in time 
  $$O\big((r \cdot \rtprt{I}{\kmax} + \rtdist{r}{\kmax} + F + 2^{r \cdot \kmax}\cdot (r + \kmax)^{O(1)}) \cdot 2^{r \cdot \kmax \cdot \ksum}\big) $$
  where
  \begin{itemize}
  \item $k_i$ is the maximum size of a solution $S$ such that $(I_i,S) \vdash \Pi_i$ for each $i \in [1,r]$,
    $\ksum = \sum_{i \in [1,r]}k_i$ and $\kmax = \max_{i \in [1,r]} k_i$.
  \item $F = \max_{i\in[1,r]}\max_{A \subseteq B \subseteq \mathcal{U}}f_i(|I_i|,|A|,|B|)$, 
  \item  $\rtprt{I}{q}$ denotes the maximum runtime for testing $(I_i,S) \vdash \Pi_i$ over every $i \in [1,r]$ and $(I_i,S) \in \inputset_i \times \solutionset$ such that  $|S| \leq q$.
  \item $\rtdist{p}{q}$ denotes the maximum runtime for computing $\texttt{dist}(S_1, \ldots, S_p)$ where for each $i \in [1,p]$, we have $|S_i| \leq q$.
  \end{itemize}
\end{restatable}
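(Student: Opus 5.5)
The plan is a bounded search tree whose nodes are tuples $(A_1,B_1,\ldots,A_r,B_r)$ with $A_i\subseteq B_i$. A node stands for the question: \emph{is there a solution $(T_1,\ldots,T_r)$ with $\texttt{dist}(T_1,\ldots,T_r)\ge d$, $(I_i,T_i)\vdash\Pi_i$, $A_i\subseteq T_i$ and $T_i\cap(B_i\setminus A_i)=\emptyset$ for every $i$?} The root is $A_i=B_i=\emptyset$ for all $i$. The progress measure is $\sum_i|A_i|$, which increases by at least one along every edge. Since $|A_i|\le |T_i|\le k_i$, the depth is at most $\ksum$.

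\textbf{Work at a node.} Compute completions greedily, in the order $i=1,\ldots,r$. Let $W_{i-1}=Y_1\cup\cdots\cup Y_{i-1}$ and
$$B'_i=B_i\cup\bigcup_{j\in[1,r]}A_j\cup W_{i-1}.$$
Call $\mathcal{A}_i$ on $(I_i,A_i,B'_i)$ to get a set $Y_i$ disjoint from $B'_i$ with $(I_i,A_i\cup Y_i)\vdash\Pi_i$, and set $S_i=A_i\cup Y_i$. Since $|S_i|\le\kmax$, each $Y_i$ has at most $\kmax$ elements. If every call succeeds, evaluate $\texttt{dist}(S_1,\ldots,S_r)$ and output the $S_i$ if the value is at least $d$. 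Otherwise, let
$$W=\bigcup_i Y_i\cup\bigcup_i A_i,$$
so $|W|\le 2r\kmax$. Branch over every pair $(i,w)$ with $i\in[1,r]$ and $w\in W\setminus A_i$ such that $|A_i|<k_i$ and $w\notin B_i\setminus A_i$, replacing $A_i,B_i$ by $A_i\cup\{w\},B_i\cup\{w\}$. There are at most $2r^2\kmax$ branches. One should also check the chosen $T_i$ branch does not need $|A_i|>k_i$; that is immediate since $|T_i|\le k_i$. The search tree then has at most $(2r^2\kmax)^{\ksum}\le 2^{O(r\kmax\ksum)}$ nodes. Each node costs $O(r F+r\,\rtprt{I}{\kmax}+\rtdist{r}{\kmax}+\mathrm{poly}(r,\kmax))$, where the $\rtprt{I}{\kmax}$ term covers sanity-checking that the output satisfies each $\Pi_i$. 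I expect the exponent bookkeeping to match the stated $2^{r\kmax\ksum}$ with the $2^{r\kmax}$ slack absorbing the per-node branching, but that is routine.

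\textbf{Correctness.} Soundness is clear, since every output is checked. Completeness is the core claim: if a node admits a witness $(T_1,\ldots,T_r)$ but the node does not output, then some $T_i$ contains an element $w\in W\setminus A_i$. That branch keeps $(T_j)_j$ as a witness, and induction on the depth finishes the argument. To prove the claim, suppose instead that $T_i\cap W\subseteq A_i$ for every $i$.
\begin{itemize}
\item \emph{All completions succeed.} The set $T_i\setminus A_i$ avoids $B_i\setminus A_i$ by the witness property, and it avoids $W$ by the assumption. Since $\bigcup_j A_j\cup W_{i-1}\subseteq W$, it avoids $B'_i$, so it is a valid answer to the completion instance $(I_i,A_i,B'_i)$.
\item \emph{The exchange.} Replace the $T_i$ by the $S_i$ one index at a time, $i=1,\ldots,r$, applying Definition~\ref{def:natdiv} at each step with $X=T_i\setminus A_i$ and $Y=Y_i$. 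For this, $Y_i$ must be disjoint from every set in the current tuple:
\begin{itemize}
\item it is disjoint from the untouched $T_j$ because $Y_i\subseteq W$ and $T_j\cap W\subseteq A_j\subseteq B'_i$;
\item it is disjoint from the already replaced $S_j=A_j\cup Y_j$ for $j<i$ because $B'_i$ contains $\bigcup_j A_j$ and $W_{i-1}$;
\item it is disjoint from its own $T_i$, as $T_i\cap W\subseteq A_i\subseteq B'_i$.
\end{itemize}
Moreover $(I_i,T_i)\vdash\Pi_i$ and $(I_i,(T_i\setminus X)\cup Y)=(I_i,S_i)\vdash\Pi_i$.
\end{itemize}
Each exchange therefore does not decrease $\texttt{dist}$, so $\texttt{dist}(S_1,\ldots,S_r)\ge\texttt{dist}(T_1,\ldots,T_r)\ge d$ and the node would have output, a contradiction.

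The main obstacle is getting these disjointness conditions right. This is why the completions must be computed sequentially, each one forbidding the previous $Y_j$ and all the $A_j$. A naive independent completion could give overlapping $Y_i$, and then the natural diversity hypothesis would not apply during the successive exchanges.
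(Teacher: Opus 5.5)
Your proof is correct, and it follows a genuinely different route from the paper's.

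\textbf{The two search trees.} The paper's Algorithm~\ref{alg:subD} keeps one global set $R$ of processed elements. It makes a single completion call per node, for one unsatisfied index $i_0$. It then branches over all $2^{r|X|}$ ways of distributing the returned set $X$ among the $r$ partial solutions, and the unassigned elements of $X$ become forbidden for everybody. Natural diversity is used once per node, only to discard the all-empty distribution in favour of giving all of $X$ to $i_0$.

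You never forbid anything explicitly; your $B_i$ always coincides with $A_i$. Instead you build $r$ completions that are pairwise disjoint and disjoint from every $A_j$, and you test that candidate directly. You then chain $r$ applications of Definition~\ref{def:natdiv} to show that a witness meeting $W$ only inside its own $A_i$ is dominated by the candidate. Otherwise you branch on a single pair $(i,w)$.

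When some call fails, $W$ should be read as the $A_j$ together with the $Y_j$ computed so far. Your first bullet then shows that the failure forces some $T_i$ to meet $W\setminus A_i$.

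\textbf{What each route buys.} You pay $r$ completion calls per node and more delicate disjointness bookkeeping. In exchange your branching factor is $O(r^2\kmax)$ instead of $2^{r\kmax}$. The tree therefore has $(2r^2\kmax)^{\ksum}=2^{O(\ksum\log(r\kmax))}$ nodes rather than $2^{r\kmax\ksum}$. For Minimum Vertex Cover this is $(2r^2k)^{rk}$ versus $2^{r^2k^2}$.

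To land on the stated bound rather than $2^{O(r\kmax\ksum)}$, argue as follows for $\ksum\ge 1$:
\begin{itemize}
\item $r(2r^2\kmax)^{\ksum}\le(2r^3\kmax)^{\ksum}$;
\item $(2r^3\kmax)^{\ksum}\le 2^{r\kmax\ksum}$ whenever $r\kmax\ge 12$, since $2r^3\kmax\le 2(r\kmax)^3\le 2^{r\kmax}$ there;
\item for $r\kmax<12$ everything is a constant.
\end{itemize}
This also absorbs your extra factor $r$ on $F$.

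\textbf{A small repair.} The test $|A_i|<k_i$ assumes $k_i$ is known. But $k_i$ is the maximum size of a solution, which the completion oracle does not give you in general; it does for fixed size properties, via one call with $A=B=\emptyset$. Replace the test by pruning a node as soon as some $\mathcal{A}_i(I_i,A_i,A_i)$ answers \texttt{No}. Then every surviving node satisfies $|A_i|\le k_i$, and the branch containing your witness is never pruned. The extra level of pruned leaves and the extra $r$ calls per node are absorbed in the same way as above.

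Some cap of this kind is also what the paper's own depth bound $\ksum$ tacitly relies on. Algorithm~\ref{alg:subD} may add elements of $X$ to $S_j$ for $j\neq i_0$ until $S_j$ exceeds $k_j$, and nothing prunes such a node until $j$ happens to be chosen as $i_0$.
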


We start by giving some intuition of the proof of the theorem before going through the details.

We produce a branching algorithm (Algorithm~\ref{alg:subD}) that keeps at each node a potential partial solution $(S_1, \ldots, S_r)$ and a set  $R$ of elements of $\elementset$ already considered.
Note that the algorithm will maintain that $S_i \subseteq R$ for each $i \in [1,r]$.
Given a node of the branching, the algorithm solves the \textsc{$\Pi_{i_0}$-completion} problem, with input $(I_{i_0},S_{i_0},R)$, for some well chosen $i_0 \in [1,r]$.
This produces a set $X$ disjoint from $R$.
Then, the algorithm branches on each possible $(S_1 \cup X_1, \ldots, S_r \cup X_r)$ and $R \cup X$ where $X_i \subseteq X$ for each $i \in [1,r]$.
Note that $\sum_{i \in [1,r]}|S_i| < \sum_{i \in [1,r]}|S_i \cup X_i|$ exept when all $X_i = \emptyset$, $i \in [1,r]$.
For this later case, as $(\Pi_{i_0}, \texttt{dist})$ is naturally diverse, we ensure that the branch where $X_{i_0} = X$ and $X_i = \emptyset$, for each $i \in [1,r] \setminus \{i_0\}$,  produces a result that is at least as good.
Thanks to this, the algorithm will not consider this branch.
As, for each $i \in [1,r]$, $k_i$ is the maximum size of a solution $S$ such that $(I_i,S) \vdash \Pi_i$. The depth of the branching tree is at most $\ksum = \sum_{i \in [1,r]}k_i$ and each step runs \textsc{$\Pi_i$-completion} in time $f_i(|I_i|,|S_i|,|R|)$. There are at most $2^{\ksum}$ branches. The runtime follows.

For proving formally Theorem~\ref{th:main}, we design Algorithm~\ref{alg:subD}, we prove its correctness  in Lemma~\ref{lemma:correctness} and its running time in Lemma~\ref{lemma:runtime} and use both lemmas to conclude.

\begin{algorithm}[b!]
  \caption{$\texttt{dist}$-sub-Diverse-($\Pi_1$,$\ldots$, $\Pi_r$) : \emph{sDiv}}
  \label{alg:subD}
  \begin{algorithmic}[1]
    \Require $I = (I_1, \ldots, I_r) \in \inputset_1 \times \ldots \times \inputset_r$, $d \in \mathbb{N}$, $(S_1,S_2,\dots, S_r,R) \in \mathcal{U}^{r+1}$ such that $\forall i \in [1,r]: S_i \subseteq R$
    \vspace{2ex}
    
    \Ensure A tuple $(S'_1, \ldots S'_r,R') \in \mathcal{U}^{r+1}$ such that $R \subseteq R'$, $\forall i \in [1,r]: S_i \subseteq S'_i \subseteq R'$,   $(S'_i \setminus S_i) \cap R = \emptyset$, $(I_i,S'_i) \vdash \Pi_i$ and $\texttt{dist}(S'_1, \ldots, S'_r) \geq d$, or a correct statement that such a tuple does not exist.
    \vspace{3ex}
    
    \If { $\forall i\in [1,r]:$ $(I_i,S_i) \vdash \Pi_i$ and $\texttt{dist}(S_1, \ldots, S_r) \geq d$}
    \State \textbf{Return} $(S_1,S_2,\dots, S_r, R)$
    \ElsIf { $\forall i \in [1,r]: (I_i,S_i) \vdash \Pi_i$ and $\texttt{dist}(S_1, \ldots, S_r) < d,$}
    \State \textbf{Return} \texttt{No}
    
    \Else
    \State Let $i_0 \in [1,r]$ such that  $(I_{i_0},S_{i_0}) \not \vdash \Pi_{i_0}$
    \State $X\gets \mathcal{A}_{i_0} ({I_{i_0}},  S_{i_0}, R)$
    \If {$X$ is not \texttt{No}}
    \For  { each $(X_1,\ldots, X_r) \subseteq X^r$}
    \If {$\bigcup_{i \in [1,r]} X_i \neq \emptyset$}
    \State ${\mathcal{S}} \gets (S_1\cup X_1,\dots, S_r\cup X_r, R \cup X)$
    \State Result $\gets sDiv(I, d, {\mathcal{S}})$
    \If { Result is not \texttt{No}}
    \State \textbf{Return } Result
    
    \EndIf               
    \EndIf
    \EndFor
    \EndIf    
    \State \textbf{Return}  \texttt{No}
    \EndIf

  \end{algorithmic}
\end{algorithm}

\begin{lemma}[Correctness of Algorithm~\ref{alg:subD}]
  \label{lemma:correctness}
  Let $r$ be an integer,  $\inputset_1, \ldots, \inputset_r$ and $\elementset$ be $r+1$ sets, $\solutionset = 2^{\elementset}$,  \texttt{dist} be a $r$-distance function over $\solutionset$, and
  for each $i \in [1,r]$, $\Pi_i$ be a property over the input space $\inputset_i$ and the solution space $\solutionset$
  such that:
  \begin{itemize}
  \item there exists an algorithm $\mathcal{A}_i$ that solves \textsc{$\Pi_i$-completion} in time $f_i(|I_i|,|A|,|B|)$ on input $(I_i,A,B)$ and 
  \item $(\Pi_i,\texttt{dist})$ is naturally diverse.
  \end{itemize}
  then, Algorithm~\ref{alg:subD}, on a valid input produces a valid output.
\end{lemma}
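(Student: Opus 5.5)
We argue by induction on the quantity $\mu(S_1,\ldots,S_r) = \ksum - \sum_{i\in[1,r]}|S_i|$, or more robustly by strong induction on the depth of the recursion tree. Termination comes first. Every recursive call strictly increases $\sum_i |S_i|$, because $\bigcup_i X_i \neq \emptyset$ and $X \cap R = \emptyset$ while $S_i \subseteq R$. If some $S_i$ exceeds size $k_i$, then the completion call on index $i_0$ only fails for that index. To get a cleaner measure, we note instead that $|R|$ strictly increases whenever $X \neq \emptyset$, and $X=\emptyset$ gives no recursive calls. Combined with the size bound, which we do not actually need for validity, this gives termination on finite universes. We therefore do induction on the recursion tree and prove two things: soundness, meaning that any returned tuple satisfies the \textbf{Ensure} clause, and completeness, meaning that \texttt{No} is returned only when no valid tuple exists.

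Soundness is routine. In the base case the returned tuple is $(S_1,\ldots,S_r,R)$, which trivially meets every requirement. In the recursive case the recursive input is valid, since $S_i\cup X_i \subseteq R\cup X$. By induction the returned $(S'_1,\ldots,S'_r,R')$ satisfies $S_i\cup X_i\subseteq S'_i\subseteq R'$, $R\cup X\subseteq R'$, and $(S'_i\setminus (S_i\cup X_i))\cap(R\cup X)=\emptyset$. Since $X\cap R=\emptyset$, we get $(S'_i\setminus S_i)\cap R=\emptyset$, and the remaining properties are inherited directly.

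Completeness is the heart of the proof. Suppose a valid tuple $(S'_1,\ldots,S'_r,R')$ exists for the input $(S_1,\ldots,S_r,R)$. If all $(I_i,S_i)\vdash\Pi_i$ hold, we must rule out returning \texttt{No} on line 4. Here Lemma~\ref{rk:natdiv} applies, since $S_i\subseteq S'_i$ and both satisfy $\Pi_i$, and it gives $\texttt{dist}(S_1,\ldots,S_r)\ge \texttt{dist}(S'_1,\ldots,S'_r)\ge d$. So line 2 fires. Otherwise fix $i_0$. The set $T=S'_{i_0}\setminus S_{i_0}$ is disjoint from $R$ and completes $S_{i_0}$, so $\mathcal{A}_{i_0}$ returns some $X\neq$ \texttt{No} with $X\cap R=\emptyset$ and $(I_{i_0},S_{i_0}\cup X)\vdash\Pi_{i_0}$. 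We must then exhibit a branch $(X_1,\ldots,X_r)$ with nonempty union for which a valid tuple still exists, and conclude by induction. Put $X_i = S'_i\cap X$. If some $X_i\neq\emptyset$, the original tuple $(S'_1,\ldots,S'_r,R'\cup X)$ is valid for the child. This holds because $S'_i\setminus(S_i\cup X_i)$ avoids both $R$ and $X$, and $S_i \cup X_i \subseteq S'_i$ since $X_i \subseteq S'_i$.

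The main obstacle is the case where every $X_i=\emptyset$, that is, $X$ is disjoint from all $S'_j$. Here we use natural diversity of $(\Pi_{i_0},\texttt{dist})$ with the choices $X^{\ast}=T=S'_{i_0}\setminus S_{i_0}$ and $Y=X$. The conditions hold: $Y\cap S'_j=\emptyset$ for all $j$, $X^{\ast}\subseteq S'_{i_0}$, $(I_{i_0},S'_{i_0})\vdash\Pi_{i_0}$, and $(S'_{i_0}\setminus T)\cup X = S_{i_0}\cup X$ satisfies $\Pi_{i_0}$. Replacing $S'_{i_0}$ by $S_{i_0}\cup X$ therefore does not decrease $\texttt{dist}$. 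The new tuple, with $R'\cup X$, is valid for the branch $X_{i_0}=X$, $X_j=\emptyset$ for $j\neq i_0$. This branch has nonempty union provided $X\neq\emptyset$, and $X\neq\emptyset$ holds because $(I_{i_0},S_{i_0})\not\vdash\Pi_{i_0}$. We must also check the disjointness requirement: $(S_{i_0}\cup X)\setminus S_{i_0}=X$ avoids $R$. Some care is needed to confirm that the child's \textbf{Ensure} constraints against $R\cup X$ are met by the other $S'_j$. This follows because $S'_j\cap X=\emptyset$. With that, the induction hypothesis yields a non-\texttt{No} result, and the algorithm returns it or an earlier success.
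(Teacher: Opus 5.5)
Your proof is correct and follows essentially the same route as the paper. Your induction over the recursion tree plays the role of the paper's ``maximal counterexample with respect to $\prec_{r+1}$'' argument, and you use the same ingredients: the soundness/completeness split, Lemma~\ref{rk:natdiv} when every $S_i$ already satisfies $\Pi_i$, the branch $X_i = X \cap S'_i$, and natural diversity with $Y = X$ to replace $S'_{i_0}$ by $S_{i_0}\cup X$ when all $X_i$ are empty. Taking $R'\cup X$ as the last coordinate of the witness for the child call is slightly more careful than the paper, which reuses $\hat{R}$ unchanged in the nonempty case even though $\hat{R}$ need not contain $X$.
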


\begin{proof}
  Assume the conditions of the lemma.
  Moreover, for any integer $p$, given $(A_1, \ldots, A_p)$ and $(B_1, \ldots, B_p)$, two elements of $\mathcal{U}^{p}$, we say that 
  $(A_1, \ldots, A_p)$ is smaller than $(B_1, \ldots, B_p)$, and write
  $(A_1, \ldots, A_p) \prec_p (B_1, \ldots, B_p)$ if for each $i \in [1,p]$, we have $A_i \subseteq B_i$ and there exists $i \in [1,p]$ such that $A_i \not = B_i$.

  We divide the proof of the lemma into two parts.
  
  \begin{claim}\label{claim:yes}
    If Algorithm~\ref{alg:subD} outputs a tuple $(S'_1, \ldots, S'_r,R')$  then $(S'_1, \ldots S'_r,R') \in \mathcal{U}^{r+1}$, $R \subseteq R'$, $\forall i \in [1,r]: S_i \subseteq S'_i \subseteq R'$, $(S'_i \setminus S_i) \cap R = \emptyset$,   $(I_i,S'_i) \vdash \Pi_i$ and $\texttt{dist}(S'_1, \ldots, S'_r) \geq d$.
  \end{claim}
  
  \begin{proofclaim}
    
    We provide a proof of correctness using a maximal counter example.
    Let $(I = (I_1, \ldots, I_r),d, (S^*_1, \ldots, S^*_r,R^*))$ be a valid input of Algorithm~\ref{alg:subD}.
    Suppose that Algorithm~\ref{alg:subD} returns a tuple ${\cal S}^*=(\hat{S}^*_1, \ldots, \hat{S}^*_r,\hat{R}^*)$ and assume that it is not a valid output.
    Over all the recursive call, done at line 12 of the algorithm, including the initial call, of Algorithm~\ref{alg:subD} that returns a tuple ${\cal S}=(\hat{S}_1, \ldots, \hat{S}_r,\hat{R})$, we consider the one where $\mathcal{S}$ is not a valid output and such that the part $(S_1, \ldots, S_r,R)$ of the input $(I,d, (S_1, \ldots, S_r,R))$ is maximal with regard to $\prec_{r+1}$ .

    Observe that Algorithm~\ref{alg:subD} returns a tuple $(\hat{S}_1, \ldots, \hat{S}_r, \hat{R})$ at line 2 or line 14.
    
    At line 2, the tuple returned is  $(S_1,\dots, S_r,R)$. By definition, $(S_1, \ldots S_r,R) \in \mathcal{U}^{r+1}$, $R\subseteq R$, $\forall i \in [1,r]:  S_i \subseteq R$, and $(S_i \setminus S_i) \cap R = \emptyset$. In addition, thanks to the condition of line 1, we have $(I_i,S_i) \vdash \Pi_i$ and $\texttt{dist}(S_1, \ldots, S_r) \geq d$ for each $i \in [1,r]$ and so the solution is correct, that is not possible by assumption.
    
    It remains the case where Algorithm~\ref{alg:subD} returns a tuple ${\cal S}=(\hat{S}_1, \ldots, \hat{S}_r,\hat{R})$ at  line 14. The tuple ${\cal S}$ is the output given at line 12 by Algorithm~\ref{alg:subD} on the input tuple  $(I,d,(S_1\cup X_1,\dots, S_r\cup X_r, R \cup X))$. 
    Line 10 of Algorithm~\ref{alg:subD}  ensure that   $(I,d, (S_1, \ldots, S_r,R)) \prec_{r+1} (I,d, (S_1\cup X_1,\dots, S_r\cup X_r, R \cup X)$. As $(I,d, (S_1, \ldots, S_r,R))$ is the input of our maximal counter example the value returned at line 12  is a valid output for input  $(I,d, (S_1\cup X_1,\dots, S_r\cup X_r, R \cup X)$. 
    Observe that $R\cup X\subseteq \hat{R}$ and so $R \subseteq  \hat{R}$.
    Moreover, as for each $i \in [1,r]$, $(\hat{S}_i \setminus (S_i \cup X_i)) \cap (R \cup X) = \emptyset$,
    we have $(\hat{S}_i \setminus (S_i \cup X_i)) \cap X = \emptyset$, implying $\hat{S}_i \cap X \subseteq (S_i \cup X_i) \cap X$, thus $\hat{S}_i \cap X = X_i$.
    In addition,$\forall i \in [i,r]$, $(I_i,\hat{S}_i) \vdash \Pi_i$ and   $\texttt{dist}(\hat{S}_1, \ldots, \hat{S}_r) \geq d$. Since $S_i\cup X_i \subseteq \hat{S}_i\subseteq \hat{R}$ we have $ S_i \subseteq \hat{S}_i\subseteq \hat{R}$.  By definition of $\mathcal{A}_{i_0}$, $R\cap X=\emptyset$. Hence, since $\hat{S}_i\sm S_i\cap R=(\hat{S}_i\sm (S_i\cup X_i)\cap R) \cup (\hat{S}_i\cap R\cap X_i$, we have $\hat{S}_i\sm S_i\cap R=(\hat{S}_i\sm (S_i\cup X_i)\cap R)$. Since $(\hat{S}_i\setminus (S_i\cup X_i) \cap (R\cup X) = \emptyset$,  $(\hat{S}_i\setminus (S_i\cup X_i) \cap R = \emptyset$ and so $\hat{S}_i\sm S_i\cap R=\emptyset$.
    Moreover, as for each $i \in [1,r]$, $S_i \subseteq R \subseteq \elementset$, $X_i \subseteq X \subseteq \elementset$, we have that $(S_1\cup X_1,\dots, S_r\cup X_r, R \cup X)\in \mathcal{U}^{r+1}$.
    Hence  ${\cal S}=(\hat{S}_1, \ldots, \hat{S}_r,\hat{R})$ is a valid output, a contradiction for being a maximal counter example. 
    
    Thus, if  Algorithm~\ref{alg:subD} returns a tuple ${\cal S}^*=(\hat{S}^*_1, \ldots, \hat{S}^*_r,\hat{R}^*)$, then this tuple is a valid output.
  \end{proofclaim}
  
  \begin{claim}
    \label{claim:no}
    If Algorithm~\ref{alg:subD} outputs \texttt{No}  then there is no tuple $(S'_1, \ldots S'_r,R') \in \mathcal{U}^{r+1}$ such that $R \subseteq R'$, $\forall i \in [1,r]: S_i \subseteq S'_i \subseteq R'$, $(S'_i \setminus S_i) \cap R = \emptyset$,   $(I_i,S'_i) \vdash \Pi_i$ and $\texttt{dist}(S'_1, \ldots, S'_r) \geq d$.
  \end{claim}
  
  \begin{proofclaim}
    We again provide a proof of correctness using a maximal counter example.
    Let $(I = (I_1, \ldots, I_r),d, (S^*_1, \ldots, S^*_r,R^*))$ be a valid input of Algorithm~\ref{alg:subD}.
    Suppose that Algorithm~\ref{alg:subD} returns \texttt{No} and assume by contradiction that
    there exists a tuple $\hat{\mathcal{S}} = (\hat{S}_1, \ldots \hat{S}_r,\hat{R}) \subseteq \mathcal{U}^{r+1}$ such that $R^* \subseteq \hat{R}$, $\forall i \in [1,r]: S^*_i \subseteq \hat{S}_i \subseteq \hat{R}$, $(\hat{S}_i \setminus S^*_i) \cap R = \emptyset$,   $(I_i,\hat{S}_i) \vdash \Pi_i$ and $\texttt{dist}(\hat{S}_1, \ldots, \hat{S}_r) \geq d$.
    Over all the recursive calls, including the initial call, of Algorithm~\ref{alg:subD} that returns \texttt{No}, we consider the one where
    the part $(S_1, \ldots, S_r,R)$ of the input $(I,d, (S_1, \ldots, S_r,R))$ is maximal with regard to $\prec_{r+1}$ and such that
    there exists a tuple $(\hat{S}_1, \ldots \hat{S}_r,\hat{R}) \subseteq \mathcal{U}^{r+1}$ such that $R \subseteq \hat{R}$, $\forall i \in [1,r]: S_i \subseteq \hat{S}_i \subseteq \hat{R}$, $(\hat{S}_i \setminus S_i) \cap R = \emptyset$,   $(I_i,\hat{S}_i) \vdash \Pi_i$ and $\texttt{dist}(\hat{S}_1, \ldots, \hat{S}_r) \geq d$.

    Observe that Algorithm~\ref{alg:subD} returns \texttt{No} at line 4 or line 19.

    In the case where for each $i \in [1,r]$, $(I_i, S_i) \vdash \Pi_i$, we have that, by Lemma~\ref{rk:natdiv} and definition of $(\hat{S}_1, \ldots \hat{S}_r,\hat{R})$ that
    $d \leq \texttt{dist}(\hat{S}_1, \ldots, \hat{S}_r) \leq \texttt{dist}({S}_1, \ldots, {S}_r)$.
    Thus, the conditions of line 3 cannot be satisfied and Algorithm~\ref{alg:subD} cannot return \texttt{No} at line 4.
    So we can assume that Algorithm~\ref{alg:subD} returns \texttt{No} at line 19.
    
    Let $i_0$ the value in $[1,r]$ selected at line 6 and $X$ the value returned at line 7.
    As we have assumed that there exists $\hat{S}_{i_0}$ such that $ S_{i_0} \subseteq \hat{S}_{i_0}$ and $(I_{i_0},\hat{S}_{i_0}) \vdash \Pi_{i_0}$ and we have $(I_{i_0},{S}_{i_0}) \not \vdash \Pi_{i_0}$, we know that $X$ cannot be \texttt{No} or empty. Indeed, $(\hat{S}_{i_0}\sm S_{i_0})\cap R = \emptyset$, hence $\hat{S}_{i_0}\sm S_{i_0}$ is a valid output for $ \mathcal{A}_{i_0} ({I_{i_0}},  S_{i_0}, R)$.
    
    Now, for each $i \in [1,r]$, we define $X_i = X \cap (\hat{S}_i \setminus S_i)$.
    Then the tuple $(X_1, \ldots, X_r)$ is one of the tuples considered in the for loop of line 9.
    
    In the case where $\bigcup_{i \in [1,r]} X_i \neq \emptyset$, as $(S_1,S_2,\dots, S_r,R) \prec_{r+1} (S_1\cup X_1,\dots, S_r\cup X_r, R \cup X)$, we have by definition of our maximal counter example that line 12 returns a ``Result'' that is not \texttt{No} as $\hat{\mathcal{S}}$ is also a solution for the recursive call.
    
    In the case where $\bigcup_{i \in [1,r]} X_i = \emptyset$,
    we define the tuple $\hat{\mathcal{S}}' = (\hat{S}'_1, \ldots \hat{S}'_r,\hat{R}') \subseteq \mathcal{U}^{r+1}$,
    where $\hat{S}'_{i_0} = S_{i_0} \cup X$, $\hat{R}' = (\hat{R} \setminus \hat{S}_{i_0}) \cup \hat{S}'_{i_0}$  and for each $i \in [1,r] \setminus \{i_0\}: \hat{S}'_i = \hat{S}_i$.
    By definition of $\hat{\mathcal{S}}$ and Definition~\ref{def:natdiv}, we have that
    $d \leq \texttt{dist}(\hat{S}_1, \ldots, \hat{S}_r) \leq \texttt{dist}(\hat{S}'_1, \ldots, \hat{S}'_r)$.
    Moreover we have that $R \subseteq \hat{R}'$, $\forall i \in [1,r]: S_i \subseteq \hat{S}'_i \subseteq \hat{R}'$, $(\hat{S}'_i \setminus S_i) \cap R = \emptyset$,   $(I_i,\hat{S}'_i) \vdash \Pi_i$.
    This implies that the tuple $(X'_1, \ldots, X'_r)$, where $X'_{i_0} = X$ and for each $i \in [1,r] \setminus \{i_0\}: X'_i = \emptyset$, of the for loop of line 9, leads, by definition of our maximal counter example, to a ``Result'' at line 12 that is not \texttt{No}.
  \end{proofclaim}
  
  Combining {(3.1)} and {(3.2)} concludes the proof of the lemma.
\end{proof}

We can now discuss the running time of Algorithm 1.
\begin{lemma}
  \label{lemma:runtime}
  Assuming that the recursive calls can be done by an oracle in constant time, Algorithm~\ref{alg:subD} proceed in
  $O(r \cdot \rtprt{I}{\kmax} + \rtdist{r}{\kmax} + \max_{i\in[1,r]}\max_{A \subseteq B \subseteq \mathcal{U}}f_i(|I_i|,|A|,|B|) + 2^{r \cdot \kmax}\cdot (r + \kmax)^{O(1)}) $ steps
  where
  
  \begin{itemize}
  \item $k_i$ is the maximum size of a solution $S$ such that $(I_i,S) \vdash \Pi_i$ for each $i \in [1,r]$,
    $\ksum = \sum_{i \in [1,r]}k_i$ and $\kmax = \max_{i \in [1,r]}k_i$.
  \item  $\rtprt{I}{q}$ denotes the maximum runtime for testing $(I_i,S) \vdash \Pi_i$ over every $i \in [1,r]$ and $(I_i,S) \in \inputset \times \solutionset$ such that  $|S| \leq q$. 
  \item $\rtdist{p}{q}$ denotes the maximum runtime for computing $\texttt{dist}(S_1, \ldots, S_p)$ where for each $i \in [1,p]$, we have $|S_i| \leq q$.
  \end{itemize}
\end{lemma}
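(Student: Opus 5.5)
We bound the cost of a single invocation of Algorithm~\ref{alg:subD} line by line, charging each recursive call at line 12 only $O(1)$ as the statement allows. Throughout, we use that every set $S_i$ carried by the algorithm has size at most $\kmax$. This is not quite automatic from the input specification, so we first argue it holds on every call reachable from a call whose $S_i$ have size at most $k_i$. If some $|S_i|>k_i$, no extension of $S_i$ can satisfy $\Pi_i$. Tests on such sets can be cut off after reading $\kmax+1$ elements, or we add a trivial size check at the start, costing $O(r\kmax)$, which answers \texttt{No} directly. With this convention, every test $(I_i,S_i)\vdash\Pi_i$ is on a set of size at most $\kmax$.

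Lines 1 and 3 perform $r$ property tests, each in time at most $\rtprt{I}{\kmax}$, and one evaluation of \texttt{dist} on $r$ sets of size at most $\kmax$, in time $\rtdist{r}{\kmax}$. The results can be cached, so these lines cost $O(r\cdot\rtprt{I}{\kmax}+\rtdist{r}{\kmax})$. Line 6 reuses these tests. Line 7 is one call to $\mathcal{A}_{i_0}$, which costs $f_{i_0}(|I_{i_0}|,|S_{i_0}|,|R|)\le F$ by definition of $F$.

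The main point is the for loop of lines 9--17. We need $|X|\le \kmax$ to bound the number of iterations by $(2^{|X|})^r\le 2^{r\kmax}$. When $\mathcal{A}_{i_0}$ returns $X$, the set $S_{i_0}\cup X$ satisfies $\Pi_{i_0}$, so $|X|\le|S_{i_0}\cup X|\le k_{i_0}\le\kmax$ by definition of $k_{i_0}$. As a safeguard, if the oracle returns a larger set, we can check its size in $O(\kmax)$ time.

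Each iteration does the following:
\begin{itemize}
\item enumerates the next tuple $(X_1,\ldots,X_r)$, for instance as an $r|X|$-bit counter;
\item tests whether $\bigcup_i X_i\neq\emptyset$;
\item builds the $r+1$ sets of the new tuple $\mathcal{S}$;
\item makes the constant-time oracle call and checks its result.
\end{itemize}
All of this takes $(r+\kmax)^{O(1)}$ time, since each set involved has size polynomial in $\kmax$. The caveat is $R\cup X$: if $R$ is copied explicitly it is not bounded by $\kmax$. We handle this by representing $R$ implicitly, as a pointer to the parent's $R$ plus the $\kmax$ new elements, or we simply charge its cost to the polynomial overhead. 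This representation issue is the step I expect to need the most care. The loop therefore costs $2^{r\kmax}(r+\kmax)^{O(1)}$, and the remaining lines cost $O(1)$. Summing all the contributions gives the bound of the lemma.
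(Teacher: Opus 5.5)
Your proposal is correct and follows the same line-by-line accounting as the paper's proof. Lines 1 and 3 cost $r$ property tests plus one distance evaluation, line 7 is a single completion call bounded by $F$, and the loop has at most $2^{r\kmax}$ iterations, each costing $(r+\kmax)^{O(1)}$. Your two additional justifications spell out what the paper's proof only assumes: $|X|\le k_{i_0}$ because $S_{i_0}\cup X$ is a solution, and an explicit size check enforcing $|S_i|\le k_i$. That check is a modification of Algorithm~\ref{alg:subD}, which as written can push $S_i\cup X_i$ above $k_i$ for $i\neq i_0$, and running it requires knowing the $k_i$.
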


\begin{proof}
  As both Line 1 and Line 3 consist of at most $r$ properties testing and one distance computing, each of them can be computed in time $r \cdot \rtprt{I}{\kmax} + \rtdist{r}{\kmax}$.
  Line 7 is a call of \textsc{$\Pi_{i_0}$-Completion}, it is done in time at most $\max_{i\in[1,r]}\max_{A \subseteq B \subseteq \mathcal{U}}f_i(|I_i|,|A|,|B|)$.
  Then in Line 9, we have our main loop.
  As $X$ is of size at most $\kmax$, we have at most $2^{\kmax}$ different subsets of $X$ and so our loop has at most $2^{r \cdot \kmax}$ iterations.
  Moreover, inside the loop, everything can be done in polynomial time in $r+\kmax$.
  This concludes the proof.
\end{proof}

We now have all we need to prove Theorem~\ref{th:main}.

\main*

\begin{proof}[\emph{Theorem~\ref{th:main}.}]
  Let $(I=(I_1, \ldots, I_r),d)$ be an input of \textsc{\texttt{dist}-Diverse-($\Pi_1$,$\ldots$, $\Pi_r$)}.
  We apply Algorithm~\ref{alg:subD} on the input $(I,d,(\emptyset, \ldots, \emptyset, \emptyset))$.
  By construction, $(I,d,(\emptyset, \ldots, \emptyset, \emptyset))$ is a valid input of Algorithm~\ref{alg:subD} and so, by Lemma~\ref{lemma:correctness}, Algorithm~\ref{alg:subD} correctly returns a solution of \textsc{\texttt{dist}-Diverse-($\Pi_1$,$\ldots$, $\Pi_r$)}.
  
  Let us now consider the running time of the algorithm.
  This algorithm is a branching algorithm. The depth of the branching tree is at most $\ksum = \sum_{i \in [1,r]}k_i$ as in each recursive call, at least one element is added to the current solution sets.
  Moreover, at each step, the algorithm branches at most $2^{r \cdot \kmax}$ times, one for each $(X_1,\ldots, X_r) \subseteq X^r$.
  Thus the number of nodes in the recursive tree is  $O(2^{r \cdot \kmax \cdot \ksum})$.
  Combining this with Lemma~\ref{lemma:runtime}, we obtain the claimed running time.  
\end{proof}

\section{Compliant distances}
\label{sec:dist}

In this section, we explicitly show how the most common distances between sets can be used in Theorem~\ref{th:main} by pointing out some situations in which they can be associated to a property to form a naturally diverse pair. 

\subsection{Disjoint distance}

Given an integer $r$, a set $\elementset$, and $\solutionset = 2^{\elementset}$, we define $\texttt{r-disjoint}: \solutionset^r \to \{0,1\}$ such that for each $(S_1, \ldots, S_r) \in \solutionset^r$ we have:
\begin{itemize}
\item $\texttt{r-disjoint}(S_1, \ldots, S_r) = 1$ if for each $1 \leq i < j \leq r$, $S_i \cap S_j = \emptyset$
\item $\texttt{r-disjoint}(S_1, \ldots, S_r) = 0$ otherwise.
\end{itemize}

\begin{lemma}
  \label{lemma:disjoint}
  Let $r$ be an integer and $\Pi$ be a property. $(\Pi,\texttt{r-disjoint})$ is naturally diverse.
\end{lemma}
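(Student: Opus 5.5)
The plan is to unfold Definition~\ref{def:natdiv} and use that \texttt{r-disjoint} only takes the values $0$ and $1$. Fix an input $I$, a tuple $(S_1,\ldots,S_r,X,Y)\in\mathcal{U}_{\Pi}^{r+2}$ with $S_j\cap Y=\emptyset$ for every $j\in[1,r]$, and an index $i\in[1,r]$ with $X\subseteq S_i$, $(I,S_i)\vdash\Pi$ and $(I,(S_i\setminus X)\cup Y)\vdash\Pi$. Write $S'_i=(S_i\setminus X)\cup Y$ and $S'_j=S_j$ for $j\neq i$. We must show
\[
\texttt{r-disjoint}(S_1,\ldots,S_r)\leq\texttt{r-disjoint}(S'_1,\ldots,S'_r).
\]
Since the left-hand side is at most $1$ and the right-hand side is at least $0$, only one case needs work: the left-hand side equals $1$, that is, the sets $S_1,\ldots,S_r$ are pairwise disjoint. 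In that case it suffices to show that $S'_1,\ldots,S'_r$ are pairwise disjoint too, so that the right-hand side is also $1$.

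For pairs $j,\ell$ both different from $i$, nothing changes, and $S'_j\cap S'_\ell=S_j\cap S_\ell=\emptyset$. For a pair involving $i$, take $j\neq i$ and expand:
\[
S'_i\cap S_j=\big((S_i\setminus X)\cap S_j\big)\cup(Y\cap S_j).
\]
The first part lies in $S_i\cap S_j$, which is empty by pairwise disjointness. The second part is empty by the hypothesis $S_j\cap Y=\emptyset$. Hence all pairs are disjoint and the inequality holds.

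There is no real obstacle here. The hypotheses that $X\subseteq S_i$ and that both sets satisfy $\Pi$ are not even needed; only $Y\cap S_j=\emptyset$ for $j\neq i$ is used. The one point to state explicitly is that when the original tuple is not pairwise disjoint, the inequality holds trivially because the left-hand side is $0$. Since nothing about $\Pi$ is used, the argument works for an arbitrary property $\Pi$, as the lemma claims.
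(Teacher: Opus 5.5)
Your proof is correct and follows the paper's argument: both split on whether \texttt{r-disjoint} of the original tuple is $0$ (trivial, since the function only takes values $0$ and $1$) or $1$, and in the latter case use pairwise disjointness together with $Y \cap S_j = \emptyset$ to conclude the modified tuple is still pairwise disjoint. Your explicit expansion of $S'_i \cap S_j$ just spells out the step the paper states in one line.
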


\begin{proof}
  Let $I \subseteq \inputset_{\Pi}$,
  let $(S_1,\ldots, S_r, X, Y) \in {\solutionset_{\Pi}}^{r+2}$
  such that $S_j \cap Y = \emptyset$ for each $j \in [1,r]$, and
  let $i \in [1,r]$
  such that $X \subseteq S_i$, $(I,S_i) \vdash \Pi$, and $(I,(S_i \setminus X) \cup Y) \vdash \Pi$.
  Without loss of generality, we assume that $i = 1$.
  If $\texttt{r-disjoint}(S_1, \ldots, S_r) = 0$, we have that $\texttt{r-disjoint}((S_1 \setminus X) \cup Y, \ldots, S_r)$, that can take only two values $0$ and $1$, has to be greater or equal to $\texttt{r-disjoint}(S_1, \ldots, S_r)$.
  If $\texttt{r-disjoint}(S_1, \ldots, S_r) = 1$, then $S_1, \ldots, S_r$ are pairwise disjoint and so, as $Y$ is also disjoint to every $S_j$, $j \in [1,r]$, we have that $\texttt{r-disjoint}((S_1 \setminus X) \cup Y, \ldots, S_r) = 1$.
  This concludes the proof.
\end{proof}

\subsection{Distances working with fixed size properties}
\label{sec:fospdistance}

In this section, we start by providing a general statement, and then we show how to apply it for specific distances.

\subsubsection{From 2-distance functions to r-distance functions}

In general, the distances are defined between two sets. In order to fit for $r$ sets, we define the sum and the minimum variants for each such distance. More formally, if $\texttt{Dist}: \mathcal{U}^2 \to \mathbb{R}$ is a distance, we define
\begin{equation}
  r\mbox{-}\texttt{SumDist}(X_1, \ldots, X_r) = \sum_{1 \leq i < j \leq r}\texttt{Dist}(X_i,X_j)
\end{equation}
and
\begin{equation}
  r\mbox{-}\texttt{MinDist}(X_1, \ldots, X_r) = \min_{1 \leq i < j \leq r}\texttt{Dist}(X_i,X_j)
\end{equation}

\begin{lemma}
  \label{lemma:distgeneric}
  Let $r$ be an integer and $\Pi$ be a fixed size property.
  Let $\texttt{Dist}: \mathcal{U}_{\Pi}^2 \to \mathbb{R}$ be a distance,
  such that for each $S_1,S_2, X, Y \subseteq \mathcal{U}_{\Pi}$ verifying $X \subseteq S_1$, $Y \cap S_1 = Y \cap S_2 = \emptyset$, and $|X| = |Y|$ we have
  $\texttt{Dist}(S_1,S_2) \leq \texttt{Dist}((S_1\setminus X) \cup Y,S_2)$.
  Then  $(\Pi,\texttt{r-MinDist})$, and $(\Pi,\texttt{r-SumDist})$ are naturally diverse.
\end{lemma}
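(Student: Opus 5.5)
We unfold Definition~\ref{def:natdiv} and reduce it to the pairwise hypothesis on \texttt{Dist}. Fix $I \in \inputset_{\Pi}$, sets $(S_1,\ldots,S_r,X,Y)$ with $S_j \cap Y = \emptyset$ for all $j$, and an index $i$ such that $X \subseteq S_i$, $(I,S_i) \vdash \Pi$ and $(I,(S_i\setminus X)\cup Y) \vdash \Pi$. By symmetry of the two aggregations we may assume $i=1$. Write $S'_1 = (S_1 \setminus X)\cup Y$ and $S'_j = S_j$ for $j \geq 2$.

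The first step uses that $\Pi$ is a fixed size property. Both $(I,S_1)$ and $(I,S'_1)$ satisfy $\Pi$, so $|S_1| = |S'_1|$. Since $X \subseteq S_1$ and $Y \cap S_1 = \emptyset$, we have $|S'_1| = |S_1| - |X| + |Y|$. Hence $|X| = |Y|$. This is exactly where the fixed size assumption enters: without it the hypothesis on \texttt{Dist} could not be invoked.

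The second step is a pairwise comparison. Pairs $\{j,\ell\}$ not containing $1$ are unchanged. For a pair $\{1,j\}$ with $j \geq 2$, the hypothesis applies with $S_1$, $S_2 := S_j$, $X$ and $Y$: indeed $X \subseteq S_1$, $Y\cap S_1 = Y \cap S_j = \emptyset$ and $|X|=|Y|$. It gives $\texttt{Dist}(S_1,S_j) \leq \texttt{Dist}(S'_1,S_j)$. The term $\texttt{Dist}(S_j,S_1)$ may also appear with $S_1$ in second position when $i\neq 1$. Symmetry of a distance, $\texttt{Dist}(A,B)=\texttt{Dist}(B,A)$, gives the same inequality for it. Thus every pairwise term weakly increases.

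Finally, monotonicity of the aggregation concludes. The sum of termwise weakly larger values is weakly larger, and likewise for the minimum. Hence both $r\mbox{-}\texttt{SumDist}$ and $r\mbox{-}\texttt{MinDist}$ satisfy the inequality of Definition~\ref{def:natdiv}. The main (mild) obstacle is the bookkeeping in the second step, which relies on \texttt{Dist} being symmetric. If symmetry is not granted, we would instead rely on the sums and minima being over unordered pairs, and the reduction to $i=1$ by relabelling would have to be checked with care.
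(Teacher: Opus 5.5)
Your proposal is correct and follows the paper's proof essentially step for step. You derive $|X|=|Y|$ from the fixed-size assumption, reduce to $i=1$, apply the hypothesis to each pair $\{1,j\}$, and conclude by monotonicity of the sum and the minimum; you additionally spell out the computation $|S'_1| = |S_1|-|X|+|Y|$ and make explicit the symmetry of \texttt{Dist} that the paper's step ``swapping the sets $S_1$ and $S_i$'' uses tacitly.
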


\begin{proof}
  Let $r$, $\Pi$, and $\texttt{Dist}$ be as in the lemma's statement.
  Let $I \subseteq \inputset_{\Pi}$,
  $(S_1,\ldots, S_r, X, Y) \subseteq {\mathcal{U}_{\Pi}}^{r+2}$
  such that, for each $j \in [1,r]$, $S_j \cap Y = \emptyset$, and
  let $i \in [1,r]$
  such that $X \subseteq S_i$, $(I,S_i) \vdash \Pi$, and $(I,(S_i \setminus X) \cup Y) \vdash \Pi$.
  Note that, as $\Pi$ is a fixed size property, $|X| = |Y|$.
  For readability, we assume that $i = 1$, possibly by swapping the sets $S_1$ and $S_i$.
  
  For the \texttt{r-SumDist} distance.
  We have that
  \begin{align*}
    \texttt{r-SumDist}(S_1, \ldots, S_r)&=\sum_{1 < j \leq r}\texttt{Dist}(S_1,S_j) + \sum_{2 \leq i < j \leq r} \texttt{Dist}(S_i,S_j)\\
                                        &\leq \sum_{1 < j \leq r} \texttt{Dist}((S_1\setminus X) \cup Y,S_j) + \sum_{2 \leq i < j \leq r} \texttt{Dist}(S_i,S_j)\\
                                        &\leq\texttt{r-SumDist}((S_1 \setminus X) \cup Y, \ldots, S_r).
  \end{align*}
  
  For the \texttt{r-MinDist} distance.
  We have that
  \begin{align*}
    \texttt{r-MinDist}(S_1, \ldots, S_r)&=\min\big(\min_{1 < j \leq r}\texttt{Dist}(S_1,S_j) , \min_{2 \leq i < j \leq r} \texttt{Dist}(S_i,S_j)\big)\\
                                        &\leq \min\big(\min_{1 < j \leq r} \texttt{Dist}(S_1\setminus X \cup Y,S_j), \min_{2 \leq i < j \leq r} \texttt{Dist}(S_i,S_j)\big)\\
                                        &\leq\texttt{r-MinDist}((S_1 \setminus X) \cup Y, \ldots, S_r).
  \end{align*}
  This concludes the proof.
\end{proof}

\subsubsection{Application to classical distance functions}

In this section, we consider several $2$-distance functions that are often used when considering diversity.
Namely, we consider the hamming distance, the Jaccard distance and the Otsuka–Ochiai coefficient.

\paragraph{Hamming distance.}
The hamming distance is one of the most classical distances to measure the changes between two sets.
Formally, given a set $\elementset$ and $\solutionset = 2^{\elementset}$, define $\texttt{Ham}: \mathcal{U}^2 \to \mathbb{N}$ as
\begin{equation}
  \texttt{Ham}(X_1,X_2) = |X_1\setminus X_2| +  |X_2\setminus X_1| 
  \label{Hamdistance}
\end{equation}

\paragraph{Jaccard distance.}

The Jaccard distance is a common tool in statistics that evaluates the dissimilarity of sample sets.
Formally, given a set $\elementset$ and $\solutionset = 2^{\elementset}$, we define $\texttt{Jac}: \mathcal{U}^2 \to \mathbb{R}$ as
\begin{equation}
  \texttt{Jac}(X_1,X_2) = \frac{\vert X_1 \cup X_2 \vert - \vert X_1 \cap X_2 \vert}{\vert X_1 \cup X_2 \vert}
  \label{jaccdistance}
\end{equation} 

\paragraph{Otsuka–Ochiai coefficient.}
The Otsuka–Ochiai coefficient is a particular case of the cosine similarity where the vector can only take binary values.
Formally, given a set $\elementset$ and $\solutionset = 2^{\elementset}$, we define $\texttt{Ots}: \mathcal{U}^2 \to \mathbb{R}$ as
\begin{equation}
  \texttt{Ots}(X_1,X_2) = 1 - \frac{| X_1 \cap X_2 |}{\sqrt{|X_1| \cdot |X_2|}}
  \label{Ots}
\end{equation}

\begin{lemma}
  \label{lemma:dist}
  Let $\Pi$ be a fixed size property. The pairs:
  \begin{itemize}
  \item $(\Pi,\texttt{r-MinHam})$, 
    $(\Pi,\texttt{r-SumHam})$,
  \item $(\Pi,\texttt{r-MinJac})$, 
    $(\Pi,\texttt{r-SumJac})$,
  \item $(\Pi,\texttt{r-MinOts})$, and 
    $(\Pi,\texttt{r-SumOts})$
  \end{itemize}
  are naturally diverse.
  
\end{lemma}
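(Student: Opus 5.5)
The plan is to reduce everything to Lemma~\ref{lemma:distgeneric}. Since $\Pi$ is a fixed size property, it suffices to show the following for each of $\texttt{Ham}$, $\texttt{Jac}$ and $\texttt{Ots}$. Whenever $X \subseteq S_1$, $Y \cap S_1 = Y \cap S_2 = \emptyset$ and $|X| = |Y|$, and we set $S'_1 = (S_1 \setminus X) \cup Y$, we have $\texttt{Dist}(S_1,S_2) \leq \texttt{Dist}(S'_1,S_2)$. The pairs with $\texttt{r-MinDist}$ and $\texttt{r-SumDist}$ then follow directly from Lemma~\ref{lemma:distgeneric}.

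The common first step is to record how the relevant cardinalities change.
\begin{itemize}
\item Since $Y \cap S_1 = \emptyset$ and $X \subseteq S_1$, we get $|S'_1| = |S_1| - |X| + |Y| = |S_1|$.
\item Since $Y \cap S_2 = \emptyset$, we get $S'_1 \cap S_2 = (S_1 \cap S_2) \setminus X$, so $|S'_1 \cap S_2| = |S_1 \cap S_2| - |X \cap S_2| \leq |S_1 \cap S_2|$.
\end{itemize}
Writing $a = |S_1| = |S'_1|$, $b = |S_2|$, $c = |S_1 \cap S_2|$ and $c' = |S'_1 \cap S_2| \leq c$, each of the three distances is a function of $(a,b,c)$ alone. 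So it suffices to check that each is non-increasing in $c$ for fixed $a,b$.

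The three checks are as follows.
\begin{itemize}
\item For Hamming, $\texttt{Ham} = a + b - 2c$, which is decreasing in $c$.
\item For Otsuka–Ochiai, $\texttt{Ots} = 1 - c/\sqrt{ab}$, which is decreasing in $c$ provided $ab > 0$.
\item For Jaccard, write $u = a + b - c$ for the union size, so that $\texttt{Jac} = (u - c)/u = 1 - c/(a+b-c)$. The map $c \mapsto c/(a+b-c)$ is increasing on $0 \leq c < a+b$, so $\texttt{Jac}$ is decreasing in $c$.
\end{itemize}

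The only real obstacle is the degenerate denominators, which the paper's formulas leave undefined. For $\texttt{Ots}$ this happens when $a = 0$ or $b = 0$. For $\texttt{Jac}$ it happens when $S_1 \cup S_2 = \emptyset$, that is, $a = b = 0$; then $S'_1 = \emptyset$ as well, so the union stays empty and both sides are equal. I would handle these cases by fixing any convention, for example value $0$ or $1$, and noting that when $a = 0$ we have $X = Y = \emptyset$ and $S'_1 = S_1$, so both sides coincide. When $b = 0$, the value does not depend on the first argument, so again both sides coincide. In every case the hypothesis of Lemma~\ref{lemma:distgeneric} holds, which concludes the proof.
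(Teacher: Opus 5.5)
Your proof is correct and takes essentially the paper's route: reduce to Lemma~\ref{lemma:distgeneric}, note that $|(S_1\setminus X)\cup Y| = |S_1|$ and that the intersection with $S_2$ shrinks by $|X\cap S_2|$, then check that each of \texttt{Ham}, \texttt{Jac}, \texttt{Ots} cannot decrease; the paper does this last step by direct computation for each distance, while you phrase it as monotonicity in $c = |S_1\cap S_2|$. Your handling of the vanishing denominators (empty $S_1$ or $S_2$) is a small, valid addition that the paper omits.
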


\begin{proof}
  Let $S_1,S_2, X, Y \subseteq \mathcal{U}_{\Pi}$ be such that $X \subseteq S_1$, $Y \cap S_1 = Y \cap S_2 = \emptyset$, and $|X| = |Y|$.
  Observe that, since $|X|=|Y|$ and $X\subseteq S_1$,  we have that $ |((S_1\sm X)\cup Y)| =|S_1|$. 
  In addition    $ |S_2\cap ((S_1\sm X)\cup Y)| =  |S_2\cap (S_1\sm X)| + |S_2\cap Y|  = |S_2\cap S_1| - |S_2\cap X|$ as $ Y\cap S_2=\emptyset$.
  Hence :

  $ \begin{array}{rcl}
    \texttt{Ham}((S_1\setminus X) \cup Y,S_2)	 &=& |S_1| + |S_2|  - 2(  |S_2\cap S_1| - |S_2\cap X| )\\
                                                 &=& |S_1|+ |S_2| -  2( |S_2\cap S_1| ) +2|S_2\cap X| \\
                                                 &=& |S_1\sm S_2|+ |S_2\sm S_1| +2|S_2\cap X| \\
                                                 &=&   \texttt{Ham}(S_1,S_2) +2|S_2\cap X|\\
                                                 &\geq&  \texttt{Ham}(S_1,S_2)
  \end{array}$

  $ \begin{array}{rcl}
    \texttt{Jac}((S_1\setminus X) \cup Y,S_2)	 &=& 1 - \frac{|((S_1\sm X)\cup Y)\cap S_2|}{|((S_1\sm X)\cup Y)\cup S_2|}\\
                                                 &=& 1 - \frac{|S_1\cap S_2| - |S_2\cap X|}{|((S_1\sm X)\cup Y)|+|S_2| - |((S_1\sm X)\cup Y)\cap S_2|}\\\
                                                 &=& 1 - \frac{|S_1\cap S_2| - |S_2\cap X|}{|S_1|+|S_2| -|S_1\cap S_2| + |S_2\cap X|}\\\
                                                 &=&   1 - \frac{|S_1\cap S_2| - |S_2\cap X|}{|S_1\cup S_2| +  |S_2\cap X|}\\\
                                                 &\geq&   1 - \frac{|S_1\cap S_2|}{|S_1\cup S_2| }\\\
                                                 &\geq& 	 \texttt{Jac}(S_1,S_2)
  \end{array}$
  
  $ \begin{array}{rcl}
    \texttt{Ots}((S_1\setminus X) \cup Y,S_2)	 &=& 1 - \frac{|((S_1\sm X)\cup Y)\cap S_2|}{\sqrt{|(S_1\sm X)\cup Y)|\cdot |S_2|}}\\
                                                 &=& 1 - \frac{|S_1\cap S_2| - |S_2\cap X|}{\sqrt{|S_1|\cdot |S_2|}}\\
                                                 &\geq&   1 - \frac{|S_1\cap S_2|}{\sqrt{|S_1|\cdot |S_2|}}\\
                                                 &\geq& 	\texttt{Ots}(S_1,S_2)	 
  \end{array}$
  
  Applying Lemma~\ref{lemma:distgeneric} concludes the proof.
\end{proof}

\section{Application exemples}
\label{sec:ex}

In this section we show how to apply our meta-algorithm to two problems, namely \textsc{Vertex Cover} for which the solution space is a set of vertices and \textsc{Steiner Tree} for which the solution space is a set of edges.

\subsection{Minimum Vertex Cover}

We remind the reader that a vertex cover of a graph $G$ is a set $S \subseteq V(G)$ such that for each $e \in E(G)$, $S \cap e = \emptyset$.
In this section, we focus on the property $\Pi_{mvc}$ that is defined as $(G,S) \in \mathcal{G} \times \mathcal{U}$ satisfy $\Pi_{mvc}$ if and only if $S$ is a vertex cover of $G$ of minimum size, where $\mathcal{U} = 2^{\mathcal{E}}$ and $\mathcal{E}$ is the set of all vertices of graphs in $\mathcal{G}$.

We can first notice that, as we require $S$ to be of minimum size, we have that $\Pi_{mvc}$ is a fixed size property.
This implies that, using Lemma~\ref{lemma:disjoint} and \ref{lemma:dist}, we have that
\begin{lemma}
  \label{lemma:mvc}
  $(\Pi_{mvc}, \texttt{dist})$ is naturally diverse for $\texttt{dist}$ being 
  $\texttt{r-disjoint}$,
  $\texttt{r-MinHam}$, 
  $\texttt{r-SumHam}$,
  $\texttt{r-MinJac}$, 
  $\texttt{r-SumJac}$,
  $\texttt{r-MinOts}$, or
  $\texttt{r-SumOts}$.
\end{lemma}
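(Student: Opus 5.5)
The plan is to read Lemma~\ref{lemma:mvc} off the two general results of Section~\ref{sec:dist}. The only thing to supply is the one structural fact they need about $\Pi_{mvc}$: that it is a fixed size property.

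First, I will verify that $\Pi_{mvc}$ is a fixed size property. Fix an input $G \in \mathcal{G}$ and two sets $S_1,S_2 \in \mathcal{U}$ with $(G,S_1) \vdash \Pi_{mvc}$ and $(G,S_2) \vdash \Pi_{mvc}$. By definition, both $S_1$ and $S_2$ are vertex covers of $G$ of minimum size. Hence $|S_1| = |S_2| = \tau(G)$, where $\tau(G)$ denotes the vertex cover number of $G$. This is exactly the fixed size condition from the Preliminaries.

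Second, the case $\texttt{dist} = \texttt{r-disjoint}$ follows directly from Lemma~\ref{lemma:disjoint}. That lemma holds for an arbitrary property $\Pi$, so it applies to $\Pi_{mvc}$ without using the fixed size observation at all.

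Third, the six remaining distances are exactly those covered by Lemma~\ref{lemma:dist}: $\texttt{r-MinHam}$, $\texttt{r-SumHam}$, $\texttt{r-MinJac}$, $\texttt{r-SumJac}$, $\texttt{r-MinOts}$ and $\texttt{r-SumOts}$. Lemma~\ref{lemma:dist} only requires its property to be a fixed size property. Invoking it with $\Pi = \Pi_{mvc}$, which is justified by the first step, gives natural diversity for all six.

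There is no real obstacle here. The only point needing care is that the fixed size property is checked per input $G$, which is how it is defined in the Preliminaries, so the minimality requirement in $\Pi_{mvc}$ is exactly what is needed.
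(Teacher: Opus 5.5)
Your proposal is correct and matches the paper's argument: the paper also notes that the minimality requirement makes $\Pi_{mvc}$ a fixed size property, then invokes Lemma~\ref{lemma:disjoint} for \texttt{r-disjoint}, which needs no assumption on the property. It invokes Lemma~\ref{lemma:dist} for the six Hamming, Jaccard and Otsuka--Ochiai variants.
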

Moreover, in order to apply Theorem~\ref{th:main}, we also need to show that there exists an algorithm that solves \textsc{$\Pi_{mvc}$-Completion}. We first restate the completion problem in the context of $\Pi_{mvc}$.

\probl{$\Pi_{mvc}$-completion}
{A graph $G \in \mathcal{G}$ and two vertex sets $A$ and $B$ in $V(G)$ such that $A \subseteq B$.}
{A set $S \in V(G) \setminus B$ such that $(G,A \cup S) \vdash \Pi_{mvc}$  or a correct statement that such a set does not exist.}

The following lemma is a folklore result that can be proven, for instance, using bounded search trees as described in~\cite[Section 3.1]{CyFoKoLoMaPiPiSa2015}.
\begin{lemma}
  \label{lemma:mvcc}
  \textsc{$\Pi_{mvc}$-Completion} can be solved in time $2^{O(k)}\cdot n^{O(1)}$ where $k$ is the size of a minimum vertex cover of the input graph.
\end{lemma}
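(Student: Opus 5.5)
We need to solve $\Pi_{mvc}$-completion: given $G$, $A\subseteq B$, find $S\subseteq V(G)\setminus B$ such that $A\cup S$ is a minimum vertex cover of $G$. The plan is to compute the minimum size and then run a constrained bounded search tree.

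\textbf{Step 1: compute $k$.} We first compute the size $k$ of a minimum vertex cover of $G$ with the classical bounded search tree of~\cite[Section 3.1]{CyFoKoLoMaPiPiSa2015}. We run the decision version for $k'=0,1,2,\ldots$ until it first succeeds. Each run takes time $2^{k'}\cdot n^{O(1)}$, so the total is $2^{O(k)}\cdot n^{O(1)}$.

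\textbf{Step 2: reduce to a forced/forbidden instance.} The vertices of $A$ are forced into the cover and those of $B\setminus A$ are forbidden. If $|A|>k$, we answer that no completion exists. If some edge has both endpoints in $B\setminus A$, we also answer no. Otherwise, every edge with exactly one endpoint in $B\setminus A$ forces its other endpoint into the solution. Let $F$ be the set of these forced vertices; if $F\cap B\setminus A\neq\emptyset$ this is already handled, since $F\subseteq V(G)\setminus B$ or $F\subseteq A$. Set $A'=A\cup F$, and answer no if $|A'|>k$.

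\textbf{Step 3: branch on the residual graph.} Let $G'=G-(B\cup F)$, i.e.\ remove all vertices of $B\setminus A$, $A$, and $F$. Every edge of $G$ not yet covered by $A'$ lies in $G'$, because edges touching $B\setminus A$ are covered by $F$ and edges touching $A$ are covered by $A$. We now look for a vertex cover $S'$ of $G'$ of size at most $k-|A'|$ using the standard $2$-way branch on an uncovered edge. This takes time $2^{k}\cdot n^{O(1)}$. If it succeeds, we output $S=F\cup S'$, which is disjoint from $B$. The set $A\cup S$ is then a vertex cover of $G$ of size at most $k$, hence exactly $k$, hence minimum, so $(G,A\cup S)\vdash\Pi_{mvc}$.

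\textbf{Step 4: correctness of a ``no'' answer.} Conversely, suppose some valid $S$ exists. Then $S$ must contain $F$ (every forbidden-endpoint edge must be covered by its other endpoint), and $S\setminus F$ covers $G'$ with size at most $k-|A'|$. So the branching would have succeeded, and a negative answer is correct.

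The only delicate point is that the budget used in Step 3 must be the true minimum $k$, which is why Step 1 computes $k$ first. The rest is routine, and the total running time is $2^{O(k)}\cdot n^{O(1)}$.
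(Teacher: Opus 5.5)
Your proof is correct and follows the bounded-search-tree approach the paper invokes. The paper gives no details beyond citing \cite[Section 3.1]{CyFoKoLoMaPiPiSa2015}; your first step, computing the true minimum $k$ by iterative deepening, is what makes the budget in Step 3 certify minimality. There is one small slip: $F$ can contain vertices of $A$ (endpoints of edges between $B\setminus A$ and $A$), so $F\cup S'$ may meet $B$; output $S=(F\setminus A)\cup S'$ instead, and in Step 4 claim $S\supseteq F\setminus A$ rather than $S\supseteq F$.
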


Combining Theorem~\ref{th:main} with Lemma~\ref{lemma:mvc} and Lemma~\ref{lemma:mvcc}, we obtain the following Theorem.
\begin{theorem}
  The \textsc{\texttt{dist}-Diverse-($\Pi_1$,$\ldots$, $\Pi_r$)} problem where
  \begin{itemize}
  \item $\Pi_1= \ldots = \Pi_r = \Pi_{mvc}$ and 
  \item \texttt{dist} is one of \texttt{disjoint}, \texttt{SumHam}, \texttt{MinHam}, \texttt{SumJac}, \texttt{MinJac}, \texttt{SumOts}, and \texttt{MinOts}
  \end{itemize}
  can be solved in time $(2^{O(k)}\cdot n^{O(1)} + 2^{r \cdot k}) \cdot 2^{r^2 \cdot k^2} \cdot n^{O(1)}$ on an instance $(I = (G,\ldots, G),d)$ where $k$ is the size of a minimum vertex cover of the input graph $G$.
\end{theorem}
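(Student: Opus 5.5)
The proof will be a direct instantiation of Theorem~\ref{th:main} with $r$ copies of $\Pi_{mvc}$ and $I_1=\ldots=I_r=G$. We need three things: natural diversity of each pair $(\Pi_{mvc},\texttt{dist})$, a completion algorithm with a bound on $F$, and bounds on $\kmax$, $\ksum$, $\rtprt{I}{\kmax}$ and $\rtdist{r}{\kmax}$.

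The first hypothesis of Theorem~\ref{th:main} is exactly Lemma~\ref{lemma:mvc}. That lemma rests on Lemma~\ref{lemma:disjoint}, and on Lemma~\ref{lemma:dist} applied to the fixed size property $\Pi_{mvc}$, since every minimum vertex cover has size $k$.

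For the second hypothesis, Lemma~\ref{lemma:mvcc} supplies the completion algorithms $\mathcal{A}_i$, all equal, running in time $2^{O(k)}\cdot n^{O(1)}$ on every input $(G,A,B)$. Hence $F \le 2^{O(k)}\cdot n^{O(1)}$.

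Next we read off the parameters. Every solution of $\Pi_{mvc}$ has size exactly $k$, so $k_i = k$ for all $i$, giving $\kmax = k$ and $\ksum = r\cdot k$. The branching factor in Theorem~\ref{th:main} is then
\[
2^{r\cdot \kmax\cdot \ksum} = 2^{r\cdot k\cdot rk} = 2^{r^2 k^2}.
\]
Testing $(G,S)\vdash \Pi_{mvc}$ is polynomial in our setting: the paper assumes every property is checkable in polynomial time. For minimum vertex cover one can, for instance, check that $S$ is a cover and compare $|S|$ with $k$, where $k$ is precomputed once by the FPT algorithm in $2^{O(k)}n^{O(1)}$ time. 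So $\rtprt{I}{k} = n^{O(1)}$, or at worst $2^{O(k)}n^{O(1)}$, which is absorbed.

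Each listed distance is computable in time polynomial in $r$ and $k$: pairwise intersections of sets of size at most $k$, then sums or minima. This gives $\rtdist{r}{k} = (rk)^{O(1)}$.

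Substituting into the bound of Theorem~\ref{th:main} gives
\[
O\big((r\,n^{O(1)} + (rk)^{O(1)} + 2^{O(k)}n^{O(1)} + 2^{rk}(r+k)^{O(1)})\cdot 2^{r^2k^2}\big).
\]
The polynomial factors in $r$, $k$ and $n$ can be pulled out as a multiplicative $n^{O(1)}$; here we use that $r$ and $k$ may be assumed at most $n$, or are otherwise absorbed. This yields the claimed $(2^{O(k)}\cdot n^{O(1)} + 2^{r\cdot k})\cdot 2^{r^2\cdot k^2}\cdot n^{O(1)}$.

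The main obstacle is bookkeeping rather than mathematics. First, we must make sure the polynomial-time verification of $\Pi_{mvc}$, which needs the value $k$, is justified, which is handled by precomputation. Second, we must absorb the factors polynomial in $r$ into $n^{O(1)}$. If $r>n$, disjointness forces a trivial answer when $k\ge 1$, and the other distances remain polynomial, so I would note that $r$ is treated as part of the input size.
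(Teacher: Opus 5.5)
Your proposal is correct and follows the paper's argument: the paper obtains this theorem by directly combining Theorem~\ref{th:main} with Lemma~\ref{lemma:mvc} and Lemma~\ref{lemma:mvcc}, and $\kmax = k$, $\ksum = rk$ give the $2^{r^2k^2}$ factor. Your extra bookkeeping fills in details the paper leaves implicit: you precompute $k$ so that testing $(G,S)\vdash\Pi_{mvc}$ really is polynomial (the paper silently assumes this), and you absorb the $r^{O(1)}$ factors using the fact that the input contains $r$ copies of $G$.
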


\paragraph{Discussion.}
We presented \textsc{Minimum Vertex Cover} as it is a classical problem, but we want to mention that for distance with strong properties, namely \texttt{disjoint}, \texttt{SumHam}, and \texttt{MinHam}, better algorithms are known.
For the \texttt{disjoint}, it is easy to see that if $r \geq 3$, no solution exists and for $r = 2$ the graph must be bipartite in order to have a solution.
For \texttt{SumHam} and \texttt{MinHam}, the results from \cite{fvBaFeJaMaOlPhRo2022} and \cite{DrMa2024} provide algorithms with running time of order $2^{r \cdot k} \cdot n^{O(1)}$.

However, the techniques used for \texttt{disjoint}, \texttt{SumHam}, and \texttt{MinHam} cannot be naturally extended to other distances.
Thus, to the best of our knowledge, we provide the first results for the distance \texttt{SumJac}, \texttt{MinJac}, \texttt{SumOts}, and \texttt{MinOts}.

\subsection{Minimum Steiner Tree}

In this section, we focus on Minimum Steiner Tree. The property $\Pi_{mst}$  is defined as $((G,T),S) \in (\mathcal{G} \times \mathcal{V}) \times \mathcal{U}$ satisfy $\Pi_{mst}$ if and only if  $S \subseteq E(G)$ is an edge set of minimum size such that $(V(S),S)$ is a connected subgraph of $G$ and such that $T \subseteq V(S)$ , where $\mathcal{V}$ is the set of all subsets of vertices of graphs in $\mathcal{G}$, $\mathcal{U} = 2^{\mathcal{E}}$ and $\mathcal{E}$ is the set of all edges of graphs in $\mathcal{G}$.

\probl{Minimum Steiner Tree}
{A graph $G$, a set $T \subseteq V(G)$.}
{A set $S \subseteq E(G)$ such that $((G,T),S) \vdash \Pi_{mst}$ or a correct statement that such a set does not exist.}

As in the previous section, as we require $S$ to be of minimum size, we have that $\Pi_{mst}$ is a fixed size property.
This implies that, using Lemma~\ref{lemma:disjoint} and \ref{lemma:dist}, we have that
\begin{lemma}
  \label{lemma:mst}
  $(\Pi_{mst}, \texttt{dist})$ is naturally diverse for $\texttt{dist}$ being 
  $\texttt{r-disjoint}$,
  $\texttt{r-MinHam}$, 
  $\texttt{r-SumHam}$,
  $\texttt{r-MinJac}$, 
  $\texttt{r-SumJac}$,
  $\texttt{r-MinOts}$, or
  $\texttt{r-SumOts}$.
\end{lemma}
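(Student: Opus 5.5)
The statement follows from the two general lemmas of Section~\ref{sec:dist}, so the plan is to check their hypotheses for $\Pi_{mst}$ and invoke them, exactly as was done for Lemma~\ref{lemma:mvc}.

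\emph{The case $\texttt{r-disjoint}$.} Lemma~\ref{lemma:disjoint} applies to every property $\Pi$ without further assumptions. So $(\Pi_{mst},\texttt{r-disjoint})$ is naturally diverse immediately.

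\emph{The remaining six distances.} Lemma~\ref{lemma:dist} requires only that $\Pi_{mst}$ be a fixed size property, and this is the one point to verify. Let $(G,T)$ be an input, and let $S_1,S_2$ satisfy $((G,T),S_1)\vdash\Pi_{mst}$ and $((G,T),S_2)\vdash\Pi_{mst}$. By definition of $\Pi_{mst}$, each $S_j$ is an edge set of \emph{minimum} size among all edge sets $S\subseteq E(G)$ such that $(V(S),S)$ is connected and $T\subseteq V(S)$. Both $S_1$ and $S_2$ belong to this family, and each is minimum in it. Hence $|S_1|\le|S_2|$ and $|S_2|\le|S_1|$, so $|S_1|=|S_2|$. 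This holds for every input, so $\Pi_{mst}$ is a fixed size property.

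Lemma~\ref{lemma:dist} then gives natural diversity for $\texttt{r-MinHam}$, $\texttt{r-SumHam}$, $\texttt{r-MinJac}$, $\texttt{r-SumJac}$, $\texttt{r-MinOts}$ and $\texttt{r-SumOts}$.

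I expect no real obstacle here, since the argument is purely definitional. The only subtlety is a degenerate input (for instance $|T|\le 1$) where the minimum solution could be empty, which makes $\texttt{Jac}$ or $\texttt{Ots}$ involve division by zero. This concerns how these distances are defined rather than natural diversity, and it is already implicitly handled, or excluded, in Lemma~\ref{lemma:dist}. I would at most add a remark that such trivial inputs are excluded or treated by convention.
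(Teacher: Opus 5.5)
Your proposal is correct and is exactly the paper's argument: minimality of solutions makes $\Pi_{mst}$ a fixed size property, so Lemma~\ref{lemma:disjoint} covers $\texttt{r-disjoint}$ and Lemma~\ref{lemma:dist} covers the six Hamming, Jaccard and Otsuka--Ochiai variants. One small correction to your side remark: since $V(S)=\bigcup_{e\in S}e$, an input with $|T|=1$ already forces a nonempty solution, so only $T=\emptyset$ can yield the empty solution on which $\texttt{Jac}$ and $\texttt{Ots}$ are undefined.
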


Again, in order to apply Theorem~\ref{th:main}, we also need to show that there exists an algorithm that solves \textsc{$\Pi_{mst}$-Completion}. We first restate the completion problem in the context of $\Pi_{mst}$.

\probl{$\Pi_{mst}$-completion}
{A graph $G \in \mathcal{G}$, a set $T \subseteq V(G)$ and two edge sets $A$ and $B$ in $E(G)$ such that $A \subseteq B$.}
{A set $S \subseteq E(G) \setminus B$ such that $((G,T),A \cup S) \vdash \Pi_{mst}$  or a correct statement that such a set does not exist.}

It is known that \textsc{Minimum Steiner Tree} can be solved in \texttt{FPT} time when parameterized by the size of $T$. The current best known algorithm is given in~\cite{FuKeMoRiRoWa2007} by the following theorem.

\begin{theorem}[\cite{FuKeMoRiRoWa2007}]
  The \textsc{Minimum Steiner Tree} problem can be solved in time $c^t\cdot n^{O(1)}$, where $t = |T|$ is the number of terminals, for any $c > 2$.
\end{theorem}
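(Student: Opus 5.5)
Since this result is imported from~\cite{FuKeMoRiRoWa2007}, we take it as given. If we wanted to re-derive it, we would follow a refined Dreyfus--Wagner dynamic program. The baseline is the classical table $s(v,X)$: the minimum number of edges of a connected subgraph of $G$ containing $X \cup \{v\}$, for every $v \in V(G)$ and every $X \subseteq T$. Dreyfus and Wagner fill it with two recurrences. The merge step is $s(v,X) = \min_{\emptyset \neq X' \subsetneq X} \big(s(v,X') + s(v,X \setminus X')\big)$, and the path step is $s(v,X) = \min_{u} \big(\mathrm{dist}(v,u) + s(u,X)\big)$, the latter computed by a shortest-path relaxation. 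The merge step costs $\sum_{X} 2^{|X|} = 3^t$, so this baseline runs in time $3^t \cdot n^{O(1)}$. Correctness rests on one structural fact: in an optimal tree rooted at $v$, either $v$ has degree at least two, in which case the tree splits at $v$, or we can walk along a path to the first vertex of degree at least three or a terminal.

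To push the base down to any $c>2$, fix an integer $k$ depending on $c$. We generalise the table to $s(V',X)$, where $V'$ is a set of at most $k$ ``anchor'' vertices. The key step is a tree-separator lemma: every Steiner tree whose leaves lie in $X \cup V'$ can be cut at a bounded number of vertices into subtrees, each of which contains at most roughly a $1/k$ fraction of the terminals of $X$ together with at most $k$ anchors. With this lemma, the merge step only needs to guess splits $X = X_1 \cup \cdots \cup X_\ell$ in which all parts except one are small. The number of such splits, summed over all $X \subseteq T$, is $\sum_X \binom{|X|}{\le |X|/k}^{O(1)}$, which is at most $(2+\varepsilon_k)^t$ with $\varepsilon_k \to 0$ as $k \to \infty$. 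The anchor sets contribute only a factor $n^{O(k)}$, which is polynomial for fixed $k$.

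The main obstacle is the separator lemma, together with checking that the restricted recurrence is still exact, i.e.\ that an optimal tree always admits a decomposition of the guessed shape. We would prove the lemma by the standard centroid argument on the tree weighted by terminals. Starting from any vertex, we repeatedly move towards the heaviest component until every component carries at most a $1/k$ fraction of the weight, and we add the cut vertices to $V'$. Finally, choosing $k$ large enough for the given $c$ yields the running time $c^t \cdot n^{O(1)}$.
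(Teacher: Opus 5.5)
Your proposal matches the paper: the theorem is imported from Fuchs et al.\ and the paper gives no proof of its own, so taking it as given is exactly what the paper does. Do not present the optional re-derivation sketch as a proof, because it is not self-consistent. A cut leaving every piece with at most a $1/k$ fraction of the terminals forces at least $k$ small pieces, so the number of guesses is not $\binom{|X|}{\le |X|/k}^{O(1)}$ with an exponent independent of $k$, which is what the $(2+\varepsilon_k)^t$ bound requires. In addition, a single centroid walk only guarantees pieces carrying at most half of the weight, not a $1/k$ fraction.
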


Since any solution of \textsc{Minimum Steiner Tree} with $p$ edges is such that $t \leq p + 1$, we deduce the following lemma.

\begin{lemma}
  \label{co:steiner}
  The \textsc{Minimum Steiner Tree} problem, on an instance $(G,T)$, can be solved in time $c^k\cdot n^{O(1)}$ for any $c > 2$ where $k$ is the size of a minimum size solution.
\end{lemma}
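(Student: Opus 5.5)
The lemma follows from the theorem of~\cite{FuKeMoRiRoWa2007} by bounding $t=|T|$ in terms of $k$. We may assume $T\neq\emptyset$, since otherwise the empty edge set (or any trivial solution) is immediate and can be handled in polynomial time. We may also assume that all vertices of $T$ lie in one connected component of $G$. This can be checked in polynomial time, and if it fails we correctly report that no solution exists.

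In the remaining case a minimum solution exists; let $S$ be one, with $|S|=k$. Since $(V(S),S)$ is connected, it has at least $|V(S)|-1$ edges, so $|V(S)|\le k+1$. As $T\subseteq V(S)$, this gives $t\le k+1$.

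Now fix any $c>2$ and run the algorithm of~\cite{FuKeMoRiRoWa2007} with this constant $c$. Its running time is $c^{t}\cdot n^{O(1)}\le c^{k+1}\cdot n^{O(1)} = c\cdot c^{k}\cdot n^{O(1)}$. The factor $c$ is a constant and is absorbed into the $O(\cdot)$, which yields $c^{k}\cdot n^{O(1)}$.

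One subtlety is the edge case where $T$ is a single vertex. Then "connected with $T\subseteq V(S)$" would force $V(S)$ to contain that vertex, but with $S=\emptyset$ we get $V(S)=\emptyset$. The minimum solution is therefore either a single edge or treated degenerately, and in either case it is found in polynomial time, consistent with the bound. The main point to check is only the inequality $t\le k+1$, and that is the standard edges-versus-vertices bound for connected graphs. No real obstacle is expected.
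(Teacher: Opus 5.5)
Your proposal is correct and is the same argument as the paper's. A connected solution with $k$ edges has at most $k+1$ vertices, so $t\le k+1$, and plugging this into the $c^{t}\cdot n^{O(1)}$ algorithm of Fuchs et al.\ gives $c^{k+1}\cdot n^{O(1)} = c^{k}\cdot n^{O(1)}$ for fixed $c>2$; your extra treatment of degenerate cases (empty or single-vertex $T$, or terminals in different components where $k$ is undefined) goes beyond the paper's one-line justification but is harmless.
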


This again provides an algorithm for the \textsc{$\Pi_{mst}$-Completion}.
\begin{lemma}
  \label{co:steinercomp}
  The \textsc{$\Pi_{mst}$-Completion} problem can be solved in time $c^k\cdot n^{O(1)}$ for any $c > 2$ where $k$ is the size of a minimum size solution.
\end{lemma}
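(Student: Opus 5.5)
The plan is to reduce \textsc{$\Pi_{mst}$-Completion} on an instance $(G,T,A,B)$ to a single call of the algorithm of Lemma~\ref{co:steiner} on a modified Steiner instance. First I would handle the trivial checks: if $(V(A),A)$ is not a forest, or if $|A|$ already exceeds the optimum $k$ for $(G,T)$, then no completion exists. The value $k$ is obtained from one call of Lemma~\ref{co:steiner} on $(G,T)$ itself, in time $c^k\cdot n^{O(1)}$.

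Next I would build the auxiliary instance. Delete from $G$ every edge of $B\setminus A$, since the completion $S$ must avoid $B$ and $A\cup S$ must not contain edges of $B\setminus A$. Then contract each connected component of $(V(A),A)$ into a single vertex. Call the result $G'$, and let the terminal set $T'$ be the image of $T\cup V(A)$ under the contraction. Since $(V(A),A)$ is a forest, each component with $a_j$ edges becomes one vertex. Consequently a connected edge set $S\subseteq E(G)\setminus B$ with $A\cup S$ a connected tree spanning $T$ corresponds exactly to a Steiner tree $S'$ of $(G',T')$, with $|A\cup S|=|A|+|S'|$. Here parallel edges and loops created by the contraction should be discarded, keeping one representative each; a loop could never be used in a tree anyway.

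I would then run Lemma~\ref{co:steiner} on $(G',T')$ to get a minimum $S'$. A completion exists if and only if $|A|+|S'|=k$, and in that case I lift $S'$ back to edges of $G$ and output it. Otherwise the answer is \texttt{No}. The running time is fine: $|T'|\le |T|+|V(A)|$, and any valid completion gives a tree with $k$ edges containing $T'$, so $|T'|\le k+1$. If no such tree exists, I cap the search by aborting at $|T'|>k+1$. The theorem of~\cite{FuKeMoRiRoWa2007} then yields time $c^{k}\cdot n^{O(1)}$.

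The main obstacle is the correctness of the contraction correspondence, specifically minimality and connectivity. I must show two things. First, a minimum tree in $G'$ lifts to a connected subgraph of $G$ containing $A$, which is true because each contracted component is connected via $A$. Second, optimality of $A\cup S$ in $G$ is exactly captured by comparing with $k$, which is true because $\Pi_{mst}$ requires global minimum size. I also need to argue that, for a minimum solution, the minimum-size edge set is acyclic, so treating $A$ as a forest is without loss of generality.
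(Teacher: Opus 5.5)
Your proof is correct, and it has the same skeleton as the paper's: remove the edges of $B\setminus A$, contract the forced edges of $A$ into terminal vertices, and make one call to the terminal-parameterized algorithm of Fuchs et al.

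The paper works one edge at a time. Reduction~1 deletes an edge of $B\setminus A$, and Reduction~2 contracts one edge of $A$. Using the maps $f,g$ and explicit path surgery, it argues that each step yields an ``equivalent'' instance, and then returns a minimum Steiner tree of the final instance.

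You instead contract all components of the forest $A$ at once and reject a cyclic $A$ up front. Crucially, you accept only if $|A|+|S'|$ equals the optimum $k$ of the \emph{original} instance $(G,T)$. That comparison is what makes the argument sound on \texttt{No}-instances. $\Pi_{mst}$ demands global minimality, and deleting or contracting edges can raise the optimum. So a minimum solution of the reduced instance lifts to a completion only when some completion exists.

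For example, take $G$ the triangle on $t_1,t_2,x$, with $T=\{t_1,t_2\}$, $A=\emptyset$ and $B=\{\{t_1,t_2\}\}$. The correct answer is \texttt{No}. The paper's procedure returns the path through $x$, whereas your test $0+2\neq 1$ rejects it.

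Likewise, your cap $|T'|\le k+1$ keeps the final call within $c^k n^{O(1)}$ for the original $k$. The paper's final call is parameterized by the reduced instance, whose optimum or terminal count is not a priori bounded by $k$.

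Your route costs one extra Steiner-tree computation, which fits in the same budget. The paper's local reductions give a finer edge-level correspondence, but they need exactly your comparison with $k$ to be complete.
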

\begin{proof}
  Let $I = (G,T,A,B)$ be an input of \textsc{$\Pi_{mst}$-Completion}.
  We present two operations on $I$ and show that applying them creates an equivalent instance of the problem. 
  We consider that two instances are equivalent if it is possible to obtain in linear time a solution to one from a solution to the other.

  \textbf{Reduction 1.} If $B \setminus A \not = \emptyset$, we select $e \in B \setminus A$ and construct $I' = (G' = (V(G), E(G) \setminus \{e\}), T, A, B \setminus \{e\})$.
  \vspace{2ex}
  
  \textit{Correctness of Reduction 1.} As a solution of \textsc{$\Pi_{mst}$-Completion} on $I$ cannot contain an element of $B \setminus A$, we have that any solution $S$ of \textsc{$\Pi_{mst}$-Completion} on $I$ is a solution of \textsc{$\Pi_{mst}$-Completion} on $I'$ and vice versa.
  
  \vspace{4ex}
  
  \textbf{Reduction 2.} If $A \not = \emptyset$ and $B \setminus A = \emptyset$, let $e = \{u,v\} \in A$ and let $w$ be a fresh vertex. Let $V' =  (V(G) \setminus \{u,v\}) \cup \{w\}$.

  For $i \in \{0,1,2\}$, let $E_i = \{z \in E(G) \mid | z \cap e | = i\}$. Observe that $E_2=\{e\}$ and so $E_0\cup E_1=E\setminus \{e\}$. Let $f: E_0 \cup E_1 \to V' \times V'$ be such that for each $z \in E_0$, $f(z) = z$ and for each $z \in E_1$, $f(z) = z \setminus e \cup \{w\}$.  For $X\subseteq E$, set $f(X)= \{f(z) \mid z\in X\cap (E_0\cup E_1\}$.Let $E' = \{f(z) \mid z \in E(G) \setminus \{e\}\}$.
  
  Let $I' = (G',T',A',B')$ be such that
  \begin{itemize}
  \item $G' = (V',E')$
  \item $T' = (T \setminus \{u,v\}) \cup \{w\}$,
  \item $A' = \{ f(z) \mid z \in (E_0 \cup E_1) \cap A\}$ (ie, $A'=f(A\sm \{e\})$), 
  \item $B' = \{ f(z) \mid z \in (E_0 \cup E_1) \cap B\}$ (ie, $B'=f(B\sm \{e\})$),
  \end{itemize}

  Roughly speaking, $G'$ is obtained from $G$ by contracting the edge $e$ and the new vertex created after the contraction is added as terminal.
  $f$ map the old edges in $G$ to the new one in $G'$.
  \vspace{2ex}

  \textit{Correctness of Reduction 2.}
  
  For the sake of the proof, we need the additional following definition.
  Let $g: E' \to E(G) \setminus \{e\}$ such that for each $z' \in E'$, 
  \begin{itemize}
  \item if $z'\cap \{w\}=\emptyset$, $g(z')=z$,
  \item if $z'=\{x,w\}$ and $\{x,u\} \in E$, $g(z')= \{x,u\}$. 
  \item if $z'= \{x,w\}$ and $\{x,u\} \notin E$ , $g(z')= \{v,u\}$. 
  \end{itemize} 
  Roughly speaking, $g$ inverses the function $f$ that is not injective. Indeed, if $\{x,u\} , \{x,v\} \in E$ then $f(\{x,u\})=f(\{x,v\})=\{w,x\}$. In that case we arbitrarily choose $g(\{w,x\})=\{u,x\}$.

  We now expose some remarks that will be used through the proof. 
  Given $X\subseteq E$, the following properties are satisfied.
  \begin{enumerate}
  \item $V(X)\setminus \{u,v\} = V(f(X))\setminus \{w\}$\label{rmdc:ensemble_sommets}
  \item $f(X)=X\setminus (E_1\cup \{\{u,v \}\}) \cup f(E_1\cap X)$. \label{rmdc:ensembles_arretes}
  \item $|f(X)|=|X|- |\{ x \in V(X) \mid \{u,x\} \in X \mbox{ and } \{v,x\}\in X \}|$. Indeed, for $x\notin \{u,v\}$, if $\{u,x\},\{v,x\}\in X$, $f(\{u,x\})=f(\{v,x\})=\{w,x\}$. \label{rmdc:tailleEnsembleParf}
  \item For $X'\subseteq E'$, $|g(X')|=|X'|$.  Indeed, $g$ is injective. \label{rmdc:tailleEnsembleParg}
  \end{enumerate}
  
  We now start by proving two claims stating that there is a solution on $G$ possibly not minimal if and only if there is a solution on $G'$ possibly not minimal. Then we prove that if one is minimal, so is the other.

  \begin{claim}\label{claim:spannintreeSDonneS'}
    If $S\subseteq E\setminus B$ is a set such that  $A\cup S \subseteq E$, $(V(A\cup S),A\cup S)$ is connected and $T\subseteq V(A\cup S)$  then  $S'=f(S)\subseteq E'\setminus B'$ is such that  $A'\cup S' \subseteq E'$, $(V(A'\cup S'),A'\cup S')$ is connected and $T'\subseteq V(A'\cup S')$. 
  \end{claim}
  
  \begin{proofclaim}
    
    Observe first that since $T\subseteq V(A\cup S)$, by remark \ref{rmdc:ensemble_sommets}, $T'\subseteq V(A'\cup S')$. 
    
    \vspace{2ex}
    
    Let $z'\in S'$, by definition, there exists $z\in S\cap (E_0\cup E_1)$ such that $f(z)=z'$. Since $S\subseteq E\setminus B$, then $z\in (E\setminus B)\cap (E_0\cup E_1)$. Hence $z'=f(z)\in E'\setminus B'$ and so $S'\subseteq E'\setminus B'$. In addition, since $f(A)\subseteq f(E)$, $A'\cup S'\subseteq E'$. 
    
    \vspace{2ex}
    
    We still have to prove that $(V(A'\cup S'),A'\cup S')$ is connected. By remark \ref{rmdc:ensemble_sommets}, $V(A\cup S)\setminus \{u,v\} = V(A'\cup S')\setminus \{w\}$.  Consider two vertices $x,y\in V(A'\cup S')$. For this part of the proof, an illustration is provided in Figure~\ref{f:sub}.

    First suppose $w \notin \{x,y\}$ and so by remark~\ref{rmdc:ensemble_sommets}, $\{x,y\}\in A\cup S$. Since $(V(A\cup S),A\cup S)$ is connected, there exists a $\{x,y\}$-path $p$ in $G[A\cup S]$.  If $V(p)\cap \{u,v\}=\emptyset$ then $p\subseteq A'\cup S'$. Otherwise, there exists two vertices $u',v'$ such that either  $\{\{u',u\}, \{u,v\}, \{v,v'\}\}\subseteq p$, $\{\{u',u\}, \{u,v'\}\} \subseteq p$, or $\{\{u',v\}, \{v,v'\}\}\subseteq p$. In all cases,  $\{\{u',w\}, \{w,v'\}\}\subseteq A'\cup S'$ as, if they exists, $\{\{u,v\}, \{u',u\},\{v,v'\} ,\{u',v\}, \{u,v'\}\}\subseteq A\cup S$. Hence, $p\setminus\{\{u',u\},\{u,v\},\{v,v'\},\{u,v'\},\{u',v\}\}\cup \{\{u',w\}, \{w,v'\}\}$ is a $\{x,y\}$-path in $G'[A'\cup S']$. In all cases, if $w \notin \{x,y\}$, there exists a $\{x,y\}$-path in $G'[A'\cup S']$. 
    
    Without loss of generality assume that $x=w$. Since $(V(A\cup S),A\cup S)$ is connected and since $\{u,v\}\in A$, there exists $w'\in V(A\cup S)$ such that either $\{w',u\}\in A\cup S$ and $f(\{w',u\})=\{w',w\}$ or $\{w',v\}\in A\cup S$ and $f(\{w',v\})=\{w',w\}$. In both cases, $\{w,w'\}\in A'\cup S'$ and by previous paragraph, there exists a $\{y,w'\}$-path in $G'[A'\cup S']$ that, adding the edge $\{w',w\}$ yields a $\{x,y\}$-path in $G'[A'\cup S']$. Hence $(V(A'\cup S'),A'\cup S')$ is connected. 
  \end{proofclaim}
  
  \begin{figure}
    \centering
    \begin{tikzpicture}
      
      \begin{scope}
        \node[-] (x) at (0,0) {$x$};
        \node[-](y) at (9,0) {$y$};
        \node[-] (u') at (3,0) {$u'$};
        \node[-] (u) at (4,0.5) {$u$};
        \node[-] (v) at (5,0.5) {$v$};
        \node[-] (v') at (6,0) {$v'$};
        \node[-] (w) at (4.5,-0.5) {$w$};
        \draw[dotted,thick] (x.east)  to[bend right=30] (1,0)to[bend left=30](2,0)to[bend right=30](u'); 
        \draw[-] (u') -- (u);
        \draw[-] (u) -- (v); 
        \draw[-] (v) -- (v'); 
        \draw[dotted,thick] (v') to[bend right=30] (7,0)to[bend left=30](8,0)to[bend right=30] (y); 
        \draw[dashed] (u') -- (w); 
        \draw[dashed] (v') -- (w); 
        
        \node[-] (G) at (5.75,1) {$G$};
        \draw[dotted] (4.5,0.55) ellipse (1.3cm and 0.5cm);
        
        \node[-] (G2) at (5.75,-1) {$G'$};
        \draw[dotted] (4.5,-0.55) ellipse (1.3cm and 0.5cm);
      \end{scope}

      \begin{scope}[yshift=-3cm]
        \node[-] (x) at (0,0) {$x$};
        \node[-](y) at (9,0) {$y$};
        \node[-] (u') at (3,0) {$u'$};
        \node[-] (u) at (4.5,0.5) {$u$};
        \node[-] (v') at (6,0) {$v'$};
        \node[-] (w) at (4.5,-0.5) {$w$};
        \draw[dotted,thick] (x) to[bend right=30] (1,0)to[bend left=30](2,0)to[bend right=30](u'); 
        \draw[-] (u') -- (u);
        \draw[-] (u) -- (v'); 
        \draw[dotted,thick] (v') to[bend right=30] (7,0)to[bend left=30](8,0)to[bend right=30] (y); 
        \draw[dashed] (u') -- (w); 
        \draw[dashed] (v') -- (w);
        
        \node[-] (G) at (5.75,1) {$G$};
        \draw[dotted] (4.5,0.55) ellipse (1.3cm and 0.5cm);
        
        \node[-] (G2) at (5.75,-1) {$G'$};
        \draw[dotted] (4.5,-0.55) ellipse (1.3cm and 0.5cm);
      \end{scope}

    \end{tikzpicture}	
    \caption{Relation between a $\{x,y\}$-path of $G$ and a  $\{x,y\}$-path of $G'$. The top part illustrates the case where a $\{x,y\}$-path in $G$ contains the edge $\{u,v\}$. The bottom part illustrates the case where a $\{x,y\}$-path in $G$ intersects only $u$ (the case for $v$ is analogous). }\label{f:sub}
  \end{figure}
  
  \begin{claim}\label{claim:spannintreeS'DonneS}
    If $S'\subseteq E'\setminus B'$ is such that  $A'\cup S' \subseteq E'$, $(V(A'\cup S'),A'\cup S')$ is connected and $T'\subseteq V(A'\cup S')$ then  $S=g(S')\subseteq E\setminus B$ is such that  $A\cup S \subseteq E$, $(V(A\cup S),A\cup S)$ is connected and $T\subseteq V(A\cup S)$.
  \end{claim}
  
  \begin{proofclaim}
    
    Observe first that since $T\subseteq V(A\cup S)$, by remark \ref{rmdc:ensemble_sommets}, $T'\subseteq V(A'\cup S')$.
    
    \vspace{2ex}
    
    Let $z\in S$. By definition of $S$ there exists $z'\in S'$ such that $g(z')=z$. Since $S'\subseteq E'\setminus B'$, $z'\in E'\setminus B'$. Therefore $z\in E\setminus B$ and so $S\subseteq E\setminus B$. In addition, since $A\subseteq E$, $A\cup S\subseteq E$.

    \vspace{2ex}
    
    We still have to prove that   $(V(A\cup S),A\cup S)$ is connected. For this part of the proof, an illustration is provided in Figure~\ref{f:sub}.  Consider two vertices $x,y\in V(A\cup S)$.

    First suppose $\{x,y\} \cap \{u,v\} = \emptyset$ and so, by remark~\ref{rmdc:ensemble_sommets}, $\{x,y\}\in A'\cup S'$. Since $(V(A'\cup S'),A'\cup S')$ is connected, there exists a $\{x,y\}$-path $p$ in $G'[A'\cup S']$.  If $V(p)\cap \{w\}=\emptyset$ then $p\subseteq A\cup S$. Otherwise, there exits two vertices $u',v'$ such that $\{u',w\},\{w,v'\}\in p$ and so  $\{u',w\},\{w,v'\}\in A'\cup S'$. Now, either $g(\{w,v'\})=\{u,v'\}$ or $g(\{w,v'\})=\{v,v'\}$. In both cases $g(\{u',w\})\in A\cup S$. If $g(\{w,v'\})=\{u,v'\}$ then $p\setminus\{\{u',w\},\{w,v'\}\}\cup \{\{u',v\},\{v,v'\}\}$ is a $\{x,y\}$-path in $G[A\cup S]$. If $g(\{w,v'\})=\{v,v'\}$, recal that $\{u,v\}\in A$ and so $p\setminus \{\{u',w\},\{w,v'\}\}\cup \{\{u',u\},\{u,v\},\{v,v'\}\}$ is a $\{x,y\}$-path in $G[A\cup S]$. Hence, if $\{x,y\}\cap \{u,v\} = \emptyset$, there exists a $\{x,y\}$-path in $G'[A'\cup S']$. 
    
    Since, $\{u,v\}\in A$, assume, without loss of generality, that $x\in \{u,v\}$ and $y\notin \{u,v\}$. Since $(V(A'\cup S'),A'\cup S')$ is connected there exists $w'\in V(A'\cup S')$ such that  $\{w',w\}\in A'\cup S'$ and either $g(\{w,w'\})=\{u,w'\}$ or  $g(\{w,w'\})=\{v,w'\}$. In both cases  $g(\{w,w'\})\in A$.  By previous paragraph, there exists a $\{y,w'\}$-path  $p$ in $G'[A'\cup S']$. If $x=u$ then either $p\cup \{\{u,w'\}\}$  (if $g(\{w,w'\})=\{u,w'\}$) or  $p\cup \{\{u,v\},\{v,w'\}\}$  (if $g(\{w,w'\})=\{v,w'\}$) is a $\{x,y\}$-path in $G[A\cup S]$. If $x=v$ then either $p\cup \{\{u,w'\},\{u,v\}\}$  (if $g(\{w,w'\})=\{u,w'\}$) or  $p\cup \{\{v,w'\}\}$  (if $g(\{w,w'\})=\{v,w'\}$) is a $\{x,y\}$-path in $G[A\cup S]$. Hence $(V(A\cup S),A\cup S)$ is connected. \end{proofclaim}

  \begin{claim}
    If $S$ is a solution on $(G,T,A,B)$  then $f(S)$ is a solution on $(G',T',A',B')$.
  \end{claim}
  
  \begin{proofclaim}
    
    Suppose that $S$ is a solution on $(G,T,A,B)$ . By Claim \ref{claim:spannintreeSDonneS'},   $S'=f(S)\subseteq E'\setminus B'$ is such that  $A'\cup S' \subseteq E'$, $(V(A'\cup S'),A'\cup S')$ is connected and $T'\subseteq V(A'\cup S')$. We still have to prove that $S'$ is of minimum size. 
    
    By contradiction, let $R'$ be solution on $(G',T',A',B')$ such that $|R'|<|S'|$. By Claim \ref{claim:spannintreeS'DonneS}, $R = g(R')$ is such that $R\subseteq E\setminus B$,  $A\cup R\subseteq E$, $(V(A\cup R),A\cup R)$ is connected and $T\subseteq V(A\cup R)$. By remark \ref{rmdc:tailleEnsembleParg}, $|R'| = |R|$ and by remark \ref{rmdc:tailleEnsembleParf} $|S'| = |S|- |\{ \{u,x\} \mid \{u,x\},\{v,x\}\in S\}|$. Therefore, as $|R'|<|S'|$, $|R| < |S| - |\{ \{u,x\} \mid \{u,x\},\{v,x\}\in S\}|$. Hence $|R|<|S|$ a contradiction to $S$ being a solution on $G$. Hence  $S'$ is solution on $(G',T',A',B')$. 
  \end{proofclaim}

  \begin{claim}
    If $S'$ is a solution on $(G',T',A',B')$  then $g(S')$ is a solution on $(G,T,A,B)$.
  \end{claim}
  
  \begin{proofclaim}
    Suppose that $S'$ is a solution on $(G',T',A',B')$ . By Claim \ref{claim:spannintreeS'DonneS}, $S=g(S')$ is such that $S\subseteq E\setminus B$,  $A\cup S \subseteq E$, $(V(A\cup S),A\cup S)$ is connected and $T\subseteq V(A\cup S)$. We still have to prove that $S$ is of minimum size. 
    
    By contradiction, let $R$ be a solution on $(G,T,A,B)$ such that $|R|<|S|$. By Claim \ref{claim:spannintreeSDonneS'}, $R' = f(R)$ is such that $R'\subseteq E'\setminus B'$,   $A'\cup R'\subseteq E'$, $(V(A'\cup R'),A'\cup R')$ is connected and $T'\subseteq V(A'\cup R')$. By remark \ref{rmdc:tailleEnsembleParf}, $|R'| = |R|- |\{ \{u,x\} \mid \{u,x\},\{v,x\}\in R\}|$ and by remark \ref{rmdc:tailleEnsembleParg} $|S| = |S'|$. Therefore, as $|S|>|R|$, $|R'| < |S'| - |\{ \{u,x\} \mid \{u,x\},\{v,x\}\in S\}|$. Hence $|R|<|S|$ a contradiction to $S'$ being a solution on $G$. Hence  $S$ is solution on $(G,T,A,B)$. \end{proofclaim}

  As long as $A\neq \emptyset$ or $B\neq \emptyset$, Reductions 1 and 2 can be applied iteratively. Hence, we obtain an equivalent instance $I' = (G',T',\emptyset, \emptyset)$. We conclude by applying Lemma~\ref{co:steiner}.

  Observe that each reduction is applied in linear time. Reduction 1 and 2 decrease $|B|$ by at least one. Since $A\subseteq B$, there are at most $|B|<n^2$ reductions. By Lemma~\ref{co:steiner}, the final step is done in $c^k n^{O(1)}$. Hence, the problem is solvable in $O(n^2) + c^k n^{O(1)}$. This concludes the proof.
\end{proof}

Combining Theorem~\ref{th:main} with Lemma~\ref{lemma:mst} and Lemma~\ref{co:steinercomp}, we obtain the following Theorem.
\begin{theorem}
  The \textsc{\texttt{dist}-Diverse-($\Pi_1$,$\ldots$, $\Pi_r$)} problem where
  \begin{itemize}
  \item $\Pi_1= \ldots = \Pi_r = \Pi_{mst}$ and 
  \item \texttt{dist} is one of \texttt{disjoint}, \texttt{SumHam}, \texttt{MinHam}, \texttt{SumJac}, \texttt{MinJac}, \texttt{SumOts}, and \texttt{MinOts}
  \end{itemize}
  can be solved in time $(c^k\cdot n^{O(1)} + 2^{r \cdot k}) \cdot 2^{r^2 \cdot k^2} \cdot n^{O(1)}$ on an instance $(I = ((G,T),\ldots, (G,T)),d)$ where $c > 2$ and $k$ is the size of a solution of \textsc{Minimum Steiner Tree} on $(G,T)$.
\end{theorem}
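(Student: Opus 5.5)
The plan is to instantiate Theorem~\ref{th:main} with $\Pi_1=\ldots=\Pi_r=\Pi_{mst}$ and $I_1=\ldots=I_r=(G,T)$, and then bound each term of the resulting running time.

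\textbf{Hypotheses of Theorem~\ref{th:main}.} The natural diversity of $(\Pi_{mst},\texttt{dist})$ for every listed distance is exactly Lemma~\ref{lemma:mst}. The completion algorithm $\mathcal{A}_i$ is supplied by Lemma~\ref{co:steinercomp}. It runs in time $c^{k'}\cdot n^{O(1)}$, where $k'$ is the minimum size of a solution of the instance it is given.

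The point to check here is that $k'$ does not exceed $k$. The completion instance asks for a completion $A\cup S$ that itself satisfies $\Pi_{mst}$ on $(G,T)$. Every such set has the fixed size $k$, since $\Pi_{mst}$ is a fixed size property. After Reductions~1 and~2, the reduced instance's solutions have size at most $k$: by remark~\ref{rmdc:tailleEnsembleParf}, $f$ does not increase sizes. If no completion exists, we argue as follows. We first compute the optimum $k$ of the original instance once, in time $c^k n^{O(1)}$ by Lemma~\ref{co:steiner}. We then run the completion algorithm with a budget derived from $k$, and answer \texttt{No} once the budget is exceeded. This gives $F\le c^k\cdot n^{O(1)}$. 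I expect this bookkeeping step to be the main obstacle, since Lemma~\ref{co:steinercomp} is phrased in terms of the reduced instance rather than the original $k$.

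\textbf{Collecting the bounds.} Since every solution has size $k$, we have $\kmax=k$ and $\ksum=r\cdot k$. So the branching factor in Theorem~\ref{th:main} becomes $2^{r\cdot\kmax\cdot\ksum}=2^{r^2k^2}$. The test $(G,T),S\vdash\Pi_{mst}$ requires two things. First, connectivity and terminal coverage, which take polynomial time. Second, minimality of $|S|$, which amounts to comparing $|S|$ with the precomputed $k$. Hence $\rtprt{I}{k}=n^{O(1)}$ after a one-time $c^kn^{O(1)}$ preprocessing.

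Each listed distance is computable in polynomial time on sets of size at most $k$, so $\rtdist{r}{k}=(rk)^{O(1)}$. The remaining term is $2^{r\cdot\kmax}(r+\kmax)^{O(1)}=2^{rk}(r+k)^{O(1)}$.

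Substituting these bounds gives $\big(r\,n^{O(1)}+(rk)^{O(1)}+c^kn^{O(1)}+2^{rk}(r+k)^{O(1)}\big)\cdot 2^{r^2k^2}$. We may assume $r,k\le n^{O(1)}$; otherwise the instance is trivial, because $k<|E(G)|$ and the solutions are subsets of $E(G)$. Under that assumption the expression is at most $(c^k n^{O(1)}+2^{rk})\cdot 2^{r^2k^2}\cdot n^{O(1)}$, which is the claimed bound. Correctness follows directly from Theorem~\ref{th:main}.
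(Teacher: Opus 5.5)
Your proposal is correct and follows the paper's approach: the paper's proof is exactly the combination of Theorem~\ref{th:main} with Lemmas~\ref{lemma:mst} and~\ref{co:steinercomp}, and your extra bookkeeping (precomputing $k$ once so that testing $\Pi_{mst}$ becomes polynomial and the completion's cost can be capped at $c^{k}n^{O(1)}$) makes explicit details the paper glosses over. The only slip is your remark that large $r$ makes the instance trivial, which is not justified (large $r$ does not make, e.g., \texttt{SumHam} trivial); instead, absorb the $r^{O(1)}$ factors by noting that the input $(I_1,\ldots,I_r,d)$ contains $r$ copies of $(G,T)$, so $r$ is bounded by the input size.
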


\paragraph{Discussion.}
We want to highlight that if the presented result is provided where the $r$ solutions we want are solutions of the same problem with the same input for readability, one can ask for instance, the set of terminals to be different for each problem $\Pi_i$ involved.

\section{Conclusion}
\label{sec:conclusion}
In this paper, we present a meta-algorithm for finding diverse solutions to a problem parameterized by the size of the requested solutions and provide a few examples of applications. 
While this approach is quite generic, we still imposed, for most of the proposed distances, that the problem is a fixed size property.
We leave it as an open question whether this constraint is necessary or can be weakened.

{\small
  \bibliographystyle{abbrv}
  \bibliography{biblio}

@article{FuKeMoRiRoWa2007,
  author       = {Bernhard Fuchs and
                  Walter Kern and
                  Daniel M{\"{o}}lle and
                  Stefan Richter and
                  Peter Rossmanith and
                  Xinhui Wang},
  title        = {Dynamic Programming for Minimum Steiner Trees},
  journal      = {Theory Comput. Syst.},
  volume       = {41},
  number       = {3},
  pages        = {493--500},
  year         = {2007},
  url          = {https://doi.org/10.1007/s00224-007-1324-4},
  doi          = {10.1007/S00224-007-1324-4},
  bibsource    = {dblp computer science bibliography, https://dblp.org}
}

@Book{CyFoKoLoMaPiPiSa2015,
  author = 	 {{Cygan}, {Marek} and {Fomin}, {Fedor V.} and {Kowalik}, {Lukasz} and {Lokshtanov}, {Daniel} and {Marx}, {Dániel} and {Pilipczuk}, {Marcin} and {Pilipczuk}, {Michal} and {Saurabh}, {Saket}},
  title = 	 {{Parameterized Algorithms}},
  Publisher =    {Springer},
  year = 	 {2015},
}

@Article{DrMa2024,
  author = 	 {{Drabik}, {Karolina} and {Masařík}, {Tomáš}},
  title = 	 {{Finding Diverse Solutions Parameterized by Cliquewidth}},
  journal =      {CoRR},
  year = 	 {2024},
  doi = 	 {10.48550/arXiv.2405.20931},
}

@Article{fvBaFeJaMaOlPhRo2022,
  author = 	 {{Baste}, {Julien} and {Fellows}, {Michael R.} and {Jaffke}, {Lars} and {Masarik}, {Tomas} and {de Oliveira Oliveira}, {Mateus} and {Philip}, {Geevarghese} and {Rosamond}, {Frances A.}},
  title = 	 {{Diversity of solutions: An exploration through the lens of fixed-parameter tractability theory}},
  journal =      {Artificial Intelligence},
  year = 	 {2022},
  doi = 	 {10.1016/j.artint.2021.103644},
  volume = 	 {303},
  pages = 	 {103644},
}

@article{CrRo2002,
title = {On the Hamming distance of constraint satisfaction problems},
journal = {Theoretical Computer Science},
volume = {288},
number = {1},
pages = {85-100},
year = {2002},
doi = {10.1016/S0304-3975(01)00146-3},
author = {P. Crescenzi and G. Rossi},
}

@InProceedings{AuBeGoLiSr2025,
  author =	{Austrin, Per and Bercea, Ioana O. and Goswami, Mayank and Limaye, Nutan and Srinivasan, Adarsh},
  title =	{{Algorithms for the Diverse-k-SAT Problem: The Geometry of Satisfying Assignments}},
  booktitle =	{52nd International Colloquium on Automata, Languages, and Programming (ICALP 2025)},
  pages =	{14:1--14:17},
  series =	{Leibniz International Proceedings in Informatics (LIPIcs)},
  ISBN =	{978-3-95977-372-0},
  ISSN =	{1868-8969},
  year =	{2025},
  volume =	{334},
  editor =	{Censor-Hillel, Keren and Grandoni, Fabrizio and Ouaknine, Jo\"{e}l and Puppis, Gabriele},
  URL =		{https://drops.dagstuhl.de/entities/document/10.4230/LIPIcs.ICALP.2025.14},
  doi =		{10.4230/LIPIcs.ICALP.2025.14},
}

@Article{EiErErFi2013,
  author = 	 {{Eiter}, {Thomas} and {Erdem}, {Esra} and {Erdogan}, {Halit} and {Fink}, {Michael}},
  title = 	 {{Finding similar/diverse solutions in answer set programming}},
  journal =      {Theory and Practice of Logic Programming},
  year = 	 {2013},
  doi = 	 {10.1017/S1471068411000548},
  volume = 	 {13},
  number = 	 {3},
  pages = 	 {303--359},
}

@Article{ArFeLoOlWo2021,
  author = 	 {{Arrighi}, {Emmanuel} and {Fernau}, {Henning} and {Lokshtanov}, {Daniel} and {de Oliveira Oliveira}, {Mateus} and {Wolf}, {Petra}},
  title = 	 {{Diversity in Kemeny Rank Aggregation: A Parameterized Approach}},
  journal =      {Proceedings of the Thirtieth International Joint Conference on Artificial Intelligence},
  year = 	 {2021},
  doi = 	 {https://doi.org/10.24963/ijcai.2021/2},
  pages = 	 {10-16},
}

@Article{NeBoNe2021,
  author = 	 {{Neumann}, {Aneta} and {Bossek}, {Jakob} and {Neumann}, {Frank}},
  title = 	 {{Diversifying greedy sampling and evolutionary diversity optimisation for constrained monotone submodular functions}},
  journal =      {Proceedings of the Genetic and Evolutionary Computation Conference},
  year = 	 {2021},
  doi = 	 {10.1145/3449639.3459385},
  pages = 	 {261-–269},
}

@InProceedings{DoGuNeNe2023,
  author = 	 {{Do}, {Anh Viet} and {Guo}, {Mingyu} and {Neumann}, {Aneta} and {Neumann}, {Frank}},
  title = 	 {{Diverse Approximations for Monotone Submodular Maximization Problems with a Matroid Constraint}},
  booktitle = 	 {Proceedings of the Thirty-Second International Joint Conference on Artificial Intelligence (IJCAI)},
  year = 	 {2023},
  doi = 	 {10.24963/ijcai.2023/617},
  pages = 	 {5558--5566},
}

@InProceedings{ArFeOlWo2023,
  author = 	 {{Arrighi}, {Emmanuel } and {Fernau}, {Henning } and {de Oliveira Oliveira }, {Mateus } and {Wolf}, {Petra}},
  title = 	 {{Synchronization and Diversity of Solutions}},
  booktitle = 	 {Proceedings of the AAAI Conference on Artificial Intelligence (AAAI)},
  year = 	 {2023},
  doi = 	 {https://doi.org/10.1609/aaai.v37i10.26361},
  pages = 	 {11516--11524},
  number = 	 {10},
  series = 	 {37},
}

@InProceedings{NaSaSi2011,
  author = 	 {{Nadel}, {Alexander} and {Sakallah}, {Karem A.} and {Simon}, {Laurent}},
  title = 	 {{Generating Diverse Solutions in SAT}},
  booktitle = 	 {Theory and Applications of Satisfiability Testing (SAT)},
  year = 	 {2011},
  doi = 	 {10.1007/978-3-642-21581-0_23},
  pages = 	 {287--301},
  series = 	 {LNCS 6695},
}

@Article{AhMeTr2024,
  author = 	 {{Ahanor}, {Izuwa} and {Medal}, {Hugh} and {Trapp}, {Andrew C.}},
  title = 	 {{DiversiTree: A New Method to Efficiently Compute Diverse Sets of Near-Optimal Solutions to Mixed-Integer Optimization Problems}},
  journal =      {INFORMS Journal on Computing},
  year = 	 {2023},
  doi = 	 {10.1287/ijoc.2022.0164},
  volume = 	 {36},
  number = 	 {1},
  pages = 	 {61--77},
}

@Article{DaWo2009,
  author = 	 {{Danna}, {Emilie} and {Woodruff}, {David L.}},
  title = 	 {{How to select a small set of diverse solutions to mixed integer programming problems}},
  journal =      {Operations Research Letters},
  year = 	 {2009},
  doi = 	 {10.1016/j.orl.2009.03.004},
  volume = 	 {37},
  number = 	 {4},
  pages = 	 {255--260},
}

@InProceedings{PeTr2015,
  author = 	 {{Petit}, {Thierry} and {Trapp}, {Andrew C.}},
  title = 	 {{Finding Diverse Solutions of High Quality to Constraint Optimization Problems}},
  booktitle = 	 {Proceedings of the Twenty-Fourth International Joint Conference on Artificial Intelligence (IJCAI)},
  year = 	 {2015},
  pages = 	 {260--267},
}

@InProceedings{HeHnSuWa2005,
  author = 	 {{Hebrard}, {Emmanuel} and {Hnich}, {Brahim} and {O'Sullivan}, {Barry} and {Walsh}, {Toby}},
  title = 	 {{Finding diverse and similar solutions in constraint programming}},
  booktitle = 	 {The 20th AAAI Conference on Artificial Intelligence (AAAI)},
  year = 	 {2005},
  pages = 	 {372--377},
}

@InProceedings{GaBePhSc2018,
  author = 	 {{Gabor}, {Thomas} and {Belzner}, {Lenz} and {Phan}, {Thomy} and {Schmid}, {Kyrill}},
  title = 	 {{Preparing for the Unexpected: Diversity Improves Planning Resilience in Evolutionary Algorithms}},
  booktitle = 	 {2018 IEEE International Conference on Autonomic Computing (ICAC)},
  year = 	 {2018},
  doi = 	 {10.1109/ICAC.2018.00023},
  pages = 	 {131--140},
}

@InProceedings{WiOp2003,
  author = 	 {{Wineberg}, {Mark} and {Oppacher}, {Franz}},
  title = 	 {{The Underlying Similarity of Diversity Measures Used in Evolutionary Computation}},
  booktitle = 	 {Genetic and Evolutionary Computation (GECCO)},
  year = 	 {2003},
  doi = 	 {10.1007/3-540-45110-2_21},
  pages = 	 {1493--1504},
  series = 	 {2724},
}

@InProceedings{FoGoJaPhSa2020,
  author = 	 {{Fomin}, {Fedor V.} and {Golovach}, {Petr A.} and {Jaffke}, {Lars} and {Philip}, {Geevarghese} and {Sagunov}, {Danil}},
  title = 	 {{Diverse Pairs of Matchings}},
  booktitle = 	 {Proceedings of the 31st International Symposium on Algorithms and Computation (ISAAC)},
  year = 	 {2020},
  doi = 	 {10.4230/LIPIcs.ISAAC.2020.26},
  pages = 	 {26:1--26:12},
  series = 	 {LIPIcs 181},
}

@InProceedings{FoGoPaPhSa2021,
  author = 	 {{Fomin}, {Fedor V.} and {Golovach}, {Petr A.} and {Panolan}, {Fahad} and {Philip}, {Geevarghese} and {Saurabh}, {Saket}},
  title = 	 {{Diverse Collections in Matroids and Graphs}},
  booktitle = 	 {Proceedings of the 38th International Symposium on Theoretical Aspects of Computer Science (STACS)},
  year = 	 {2021},
  doi = 	 {10.4230/LIPIcs.STACS.2021.31},
  pages = 	 {31:1--31:14},
  series = 	 {LIPIcs 187},
}

@Article{IcIw2024,
  author = 	 {{Ichikawa}, {Yuma} and {Iwashita}, {Hiroaki}},
  title = 	 {{Continuous Tensor Relaxation for Finding Diverse Solutions in Combinatorial Optimization Problems}},
  journal =      {CoRR},
  year = 	 {2024},
  doi = 	 {10.48550/arXiv.2402.02190},
}

@InProceedings{InGaStTa2020,
  author = 	 {{Ingmar}, {Linnea} and {Garcia de la Banda}, {Maria} and {Stuckey}, {Peter J.} and {Tack}, {Guido}},
  title = 	 {{Modelling Diversity of Solutions}},
  booktitle = 	 {Proceedings of the AAAI Conference on Artificial Intelligence (AAAI)},
  year = 	 {2020},
  doi = 	 {10.1609/aaai.v34i02.5512},
  pages = 	 {1528-1535},
  number = 	 {2},
  series = 	 {34},
}

@InProceedings{HaKoKuLeOt2022,
  author = 	 {{Hanaka}, {Tesshu} and {Kobayashi}, {Yasuaki} and {Kurita}, {Kazuhiro} and {Lee}, {See Woo} and {Otachi}, {Yota}},
  title = 	 {{Computing Diverse Shortest Paths Efficiently: A Theoretical and Experimental Study}},
  booktitle = 	 {Proceedings of the AAAI Conference on Artificial Intelligence (AAAI)},
  year = 	 {2022},
  doi = 	 {10.1609/aaai.v36i4.20290},
  pages = 	 {3758-3766},
  number = 	 {4},
  series = 	 {36},
}

@InProceedings{HaKoKuOt2021,
  author = 	 {{Hanaka}, {Tesshu} and {Kobayashi}, {Yasuaki} and {Kurita}, {Kazuhiro} and {Otachi}, {Yota}},
  title = 	 {{Finding Diverse Trees, Paths, and More}},
  booktitle = 	 {Proceedings of the AAAI Conference on Artificial Intelligence (AAAI)},
  year = 	 {2021},
  doi = 	 {10.1609/aaai.v35i5.16495},
  pages = 	 {3778-3786},
  number = 	 {5},
  series = 	 {35},
}
}

\end{document}